\documentclass[journal,final,a4paper,twocolumn,10pt]{IEEEtran}%
\pdfoutput=1

\usepackage{etex}

\usepackage[english]{babel}
\usepackage{makeidx}
\makeindex

\usepackage[absolute,overlay]{textpos} %
\usepackage[usenames,svgnames,table]{xcolor} %

\usepackage{algorithm}
\usepackage{algorithmicx}
\usepackage{algcompatible}
\usepackage{longtable}

\usepackage[normalem]{ulem} %
\usepackage{pifont} %
\usepackage{amsmath} %
\usepackage{amsfonts} %
\usepackage{amssymb}
\usepackage{array} %
\usepackage{graphicx} %

\usepackage{xifthen}

\usepackage[colorlinks=false,pdfborder={0 0 0},plainpages=false]{hyperref}

\usepackage[english]{babel} %
\usepackage{hyphenat} %
\usepackage{setspace}
\usepackage{url}
\usepackage{framed}
\usepackage{fancyhdr}
\usepackage{lastpage} %
\usepackage[us,24hr]{datetime} %

\usepackage{rotating} %
\usepackage{dcolumn} %
\usepackage{pdfpages} %
\usepackage{setspace}
\usepackage{amssymb} %
\usepackage{multirow} %
\usepackage[labelfont=bf,textfont=bf]{caption} %
\usepackage{tabularx}
\usepackage{tabulary}
\usepackage{colortbl}
\usepackage{balance}
\usepackage{placeins} %
\usepackage{multicol} %
\usepackage{lipsum} %

\usepackage{fancyvrb}
\usepackage{enumitem} %
\usepackage{parskip}
\usepackage{needspace} %
\usepackage[labelformat=simple]{subcaption}

\usepackage[utf8]{inputenc}
\usepackage[T1]{fontenc}

\textblockorigin{0pt}{0pt}

\makeatletter

\newcommand{\facp}{\@ifstar \facpStar \facpNoStar}

\newcommand{\facpNoStar}[1]{%
\expandafter\ifx\csname ac@#1\endcsname\AC@used %
\acs*{#1s}\acused{#1}%
\else
\acf*{#1s}\acused{#1}%
\fi
}

\newcommand{\facpStar}[1]{%
\expandafter\ifx\csname ac@#1\endcsname\AC@used %
\acs*{#1s}%
\else
\acf*{#1s}%
\fi
}

\makeatother

\makeatletter
\newcommand{\checkhere}[1]{%
\leavevmode\vbox to\z@{%
\vss%
\rlap{\vrule\raise .75em%
\hbox{\underbar{\normalfont\footnotesize\ttfamily \ding{43} #1}}}}%
\@latex@warning{A "checkhere" mark is still present and uncommented.}%
}
\makeatother

\makeatletter
\newcommand{\hg}[1]{%
\noindent\ding{43} \texttt{\footnotesize\uline{#1}} \hfil \\ %
\@latex@warning{A suggestion is still present and uncommented.}%
}
\makeatother

\newcolumntype{d}[1]{D{.}{.}{#1}}
\newcolumntype{.}{D{.}{.}{-1}}
\makeatletter
\newcolumntype{E}[1]{>{\boldmath\color{red}\DC@{.}{.}{#1}}c<{\DC@end}}
\newcolumntype{F}[1]{>{\boldmath\color{green}\DC@{.}{.}{#1}}c<{\DC@end}}
\makeatother

\definecolor{good}{rgb}{0,.8,0}
\definecolor{bad}{rgb}{.9,0,0}
\definecolor{mygreen}{rgb}{0.6,0.8,0}
\definecolor{myorange}{rgb}{1,0.4,0}
\definecolor{myblue}{rgb}{0.2,0.4,1}
\definecolor{myviolet}{rgb}{0.498,0,0.498}
\definecolor{myred}{rgb}{0,0.498,0.498}

\RequirePackage[normalem]{ulem}
\RequirePackage{color}\definecolor{RED}{rgb}{1,0,0}\definecolor{BLUE}{rgb}{0,0,1}

\definecolor{quote}{rgb}{.13,.33,.0}

\makeatletter
\def\FrameCommand{%
\pdfcolorstack\@pdfcolorstack push{\current@color}%
\hspace{.3in} \vrule width 3pt%
\pdfcolorstack\@pdfcolorstack pop\relax%
\fboxrule=\FrameRule \fboxsep=\FrameSep \fbox%
}
\makeatother

\newcommand{\modification}[1]{%
\MakeFramed {\advance\hsize -1.3in \FrameRestore}%
\noindent%
#1
\endMakeFramed
}

\makeatletter

\usepackage{accents}
\usepackage{pgfplots} 
\pgfplotsset{compat=1.9}

\graphicspath{{./figures/}{./figures/png/}}

\usepackage{ifdraft}

\ifdraft{\renewenvironment{multline}{\begin{equation}}{%
    \end{equation}\ignorespacesafterend%
}}{}

\newtheorem{algo}{\textbf{\textit{Algorithm}}}
\newtheorem{lemma}{\textbf{\textit{Lemma}}}
\newtheorem{theorem}{\textbf{\textit{Theorem}}}
\newtheorem{corollary}{\textbf{\textit{Corollary}}}

\newtheorem{definition}{\textbf{\textit{Definition}}}
\newtheorem{conjecture}{\textbf{\textit{Conjecture}}}

\newenvironment{proof}{\textit{Proof:}}{\hfill$\blacksquare$}

\begin{document}

\date{\today}

\title{Lower Bounds on the Redundancy of Huffman Codes with Known and Unknown Probabilities}

\author{Ian~Blanes, %
		Miguel Hernández-Cabronero,\\
        Joan~Serra-Sagrist{\`a}, %
		and~Michael W. Marcellin%
        
\thanks{This work was supported in part by
the Centre National d'Etudes Spatiales,
and by the Spanish Ministry of Economy and Competitiveness (MINECO), by
the European Regional Development Fund (FEDER) and by the European Union
under grant TIN2015-71126-R and RTI2018-095287-B-I00, and by the Catalan Government
under grants 2014SGR-691 and 2017SGR-463.}

}

\markboth{}{}

\graphicspath{ {../figures/} }

\maketitle

\begin{abstract}
\noindent 
In this paper we provide a method 
to obtain tight lower bounds on the minimum redundancy achievable by a Huffman code 
when the probability distribution underlying an alphabet is only partially known.
In particular, we address the case where the occurrence probabilities are unknown for some of the symbols in an alphabet. Bounds can be obtained for alphabets of a given size,
for alphabets of up to a given size, and for alphabets of arbitrary size. 
The method operates on a Computer Algebra System, yielding closed-form
numbers for all results.
Finally, we show the potential of the proposed method to shed some light on
the structure of the minimum redundancy achievable by the Huffman code.
\end{abstract}

\begin{IEEEkeywords}%
Huffman code, redundancy, lower bounds.%
\end{IEEEkeywords}

\section{Introduction} \label{sect1}

\newcommand{\p}{\text{\rm p}}
\newcommand{\len}{\text{\rm l}}

\IEEEPARstart{L}{et} $A$ be a discrete memoryless source of $n \ge 2$ symbols,
having an alphabet $\{a_1,\dots,a_n\}$ with associated probability distribution
$\{\p(a_1),\dots,\p(a_n)\}$.
Let
\begin{equation}
\mathcal{R}_C(A)=\mathcal{L}_C(A) - \mathcal{H}(A)
\end{equation}
be the redundancy achieved by a code $C$, where 
\begin{equation}
\mathcal{L}_C(A)=\sum_{i=1}^n \p(a_i) \cdot \len_C(a_i)
\end{equation}
is the average length of $C$ when applied to source $A$, $\len_C(a_i)$ is the length of the codeword for $a_i$, and

\begin{equation}
\mathcal{H}(A)=-\sum_{i=1}^n \p(a_i)\cdot \log \p(a_i)
\end{equation}
is the entropy of $A$. Without loss of generality, we assume that $\p(a_i) > 0$ for all $i$ and that all logarithms are base $2$.

The Huffman Algorithm~\cite{H52} can be employed to produce 
a Huffman code, $H$, for a source $A$. Huffman codes are prefix-free and are
optimal in the sense that no other code can achieve
a lower redundancy than $\mathcal{R}_H(A)$ when coding one symbol at a time.

Producing bounds on the redundancy of a Huffman code  
for which the underlying probability distribution is only partially known
is a recurring theme in the literature.
Many authors have described successively more accurate bounds on the redundancy of a Huffman code when only the probability of the most-likely symbol is known. 
In~\cite{Gal78}, an upper bound is provided for this case. 
An improved upper bound and a lower bound are described in~\cite{Joh80}.
In particular, a tight lower bound is provided when the (known) probability of the most-likely symbol is greater than $0.4$.
The previous upper bound is later extended for the D-ary case in~\cite{GO87}.
Bounds are further refined in~\cite{CGT86}, and~\cite{MA87} provides a lower bound that is tight for all values of the probability of the most likely symbol.
Improvements on the upper bound 
are provided in~\cite{CD89}.
In~\cite{Bir90}, the lower bound provided in~\cite{MA87}
is improved for a known alphabet length. 

The concept of local redundancy, a useful tool to prove previous results through different means, is introduced in \cite{Yeu91}.
Further bound improvements are provided in~\cite{CD91,Man92,PD96,DD97,Bae11}.
Tight upper and lower bounds are provided for one known symbol (not necessarily the most probable symbol) in~\cite{MPK06}.
The lower bound is obtained through convex optimization
over a small set of prefix-free codes sharing a singular structure.
Interestingly, the tight lower bound for one known symbol is the same as the one obtained in~\cite{MA87} for the known most-likely symbol case.
In~\cite{Stu94}, redundancy bounds are provided for binomially-distributed source words. 
Extending the previous work, convergent and oscillatory asymptotic behaviors 
are described as word sizes become larger in~\cite{Szp00}.
Application to Shannon codes with Markov inputs is futher described in~\cite{MS13}.

The main contribution of this paper is an algorithm (presented in Section~\ref{sect4}) which obtains a tight lower bound on the minimum redundancy achievable by a Huffman code for sources in which occurrence probabilities are only known for some
symbols. In contrast with previous bounds, the proposed method 
obtains bounds, in general, for two or more known probabilities.
The algorithm can be executed in a Computer Algebra System~(CAS)
and produces its result as a closed-form number~(as defined in~\cite{Cho99}).

While this result sheds light on the 
structure of Huffman code redundancy,
our main motivation behind pursuing this problem is to further advance 
the still challenging problem of finding optimal variable-to-variable (V2V)
codes~\cite{Kho72}.
Twenty-six years ago, in one of the earliest attempts at V2V code construction, Fabris already hinted at ``no algorithmic solution better than exhaustive search''~\cite{Fab92}.
Later, non-optimal construction methods were provided in~\cite{SDJ04}
and in~\cite{BDS08}, where low-redundancy (and even vanishing-redundancy) codes are achieved at the expense of increased code tables.
Fortunately, V2V codes with precomputed code tables of manageable size can be used in practice, as illustrated for example by their use in~\cite{CCSDS-MHDC-123B2-BlueBook},
where they are employed as a higher-efficiency replacement for Golomb codes~\cite{Gol66}.
A similar approach could be followed in recent compression schemes that use Golomb codes, such as~\cite{CLS+16,MRW17,TK18}.

While a feasible method for medium-sized optimal V2V code construction has not yet been found,
exhaustive search may still prove to be a useful approach.
Recently, Kirchhoffer et al. explore exhaustive search and provide strategies to reduce the complexity of a search for minimum redundancy~\cite{KMS+18}.
The work presented herein can be employed to significantly prune the search space of V2V codes.
Even if we do not directly address V2V codes in this manuscript, they are an important motive
for this research, and thus a few pruning examples will be provided in Section~\ref{sect5}.

Another significant contribution of this work is the presentation of novel redundancy maps in Section~\ref{sect5},
which provide a visual representation of minimum redundancy and insight into which code covers each minimum redundancy region. 
In particular, from one of the redundancy maps we can extend~\cite{MPK06} through
a conjecture for the minimum redundancy formula for the case of two known probabilities.
Moreover, the framework provided herein has potential for further interesting extensions, such as the imposition of
more general constraints on probability distributions. For example, in addition to the equality constraints discussed in this work, the method could be extended to include inequality constraints,  e.g., at least one probability is less than $0.1$ and at least one probability is greater than $0.5$.

\subsection{The Huffman Algorithm}

For the purpose of establishing a common notation
and being able to draw parallels, a description of the Huffman Algorithm follows for the (usual) case where all symbol probabilities of a source are known.
For simplicity, and with some abuse of notation, we write
$A=\{a_1,\dots,a_n\}$ to describe a source $A$ with alphabet $\{a_1,\dots,a_n\}$, and associated occurrence probabilities $\{\p(a_1),\dots,\p(a_n)\}$. The number of symbols in the alphabet is denoted by $|A|$.

The Huffman Algorithm is described here in terms of a state machine. 
In the following, the current state $\Theta^{(i)}$ is successively updated by a state transition function $\mathrm{h}(\cdot)$.
At this point in the development, the state takes the form of a discrete source.
Specifically, given a source $A=\{a_1,\dots,a_n\}$, state transition function $\mathrm{h}(\cdot)$ results in a new source by merging symbols $a_j$ and $a_k$ into a new symbol, denoted by $\left[a_j,a_k\right]$ with associated probability 
$\p\left(\left[a_j,a_k\right]\right) = \p\left(a_j\right) + \p\left(a_k\right)$, as follows:
\begin{equation}
\mathrm{h}\left(A,j,k\right) = \big( A
\setminus \left\{ a_j, a_k \right\} \big) \cup \{ \left[a_j,a_k\right] \}.
\end{equation}

\begin{algo}[The Huffman Algorithm] % $ $\newline
Let $A=\{a_1,\dots,a_n\}$ be a discrete source of $n\ge 2$ symbols.
Let $\Theta^{(i)} = \{\theta^{(i)}_1,\dots,\theta^{(i)}_{n-i}\}$
be the state in the $i$th iteration.

\begin{enumerate}
\item Let $\Theta^{(0)} = A$ and set $i \gets 0$.
\item Find indices $j$ and $k$, with $1 \le j<k \le n - i$, of two elements in $\Theta^{(i)}$ such that no other
element in $\Theta^{(i)}$ has a smaller occurrence probability.
 I.e.,
\[ \p(\theta_j^{(i)}) \le \p(\theta_l^{(i)}) \hspace{1em} \forall\, l \text{ s.t. } l \neq k, \]
\[ \p(\theta_k^{(i)}) \le \p(\theta_l^{(i)}) \hspace{1em} \forall\, l \text{ s.t. } l \neq j. \]

\item Merge the two elements to obtain $\Theta^{(i+1)} = \mathrm{h}(\Theta^{(i)}, j , k)$.
\item Set $i \gets i + 1$.
\item If $i < n - 1$, go to step 2.
\item Stop.
\end{enumerate}
\label{a1}
\end{algo}

The result of this algorithm is a source $\Theta^{(n-1)}$ containing a single symbol 
$\theta_1^{(n-1)}$ composed by the repeated merging of the original symbols. This symbol is equivalent to the well-known representation of a Huffman code in tree form. 
The codeword lengths $\len(a_i)$ for the original symbols $a_i$ can be obtained from $\theta_1^{(n-1)}$ by counting the number of ``$\,\,]\,$'' minus the number of ``$\,[\,\,$'' to the right of $a_i$.
For example, for the case of $A=\{a_1,a_2,a_3,a_4,a_5\}$ with 
$\p(a_1)=0.10$,
$\p(a_2)=0.21$,
$\p(a_3)=0.15$,
$\p(a_4)=0.30$, and
$\p(a_5)=0.24$, the Huffman Algorithm yields (within a permutation)
\begin{equation}
\theta_1^{(4)} = \Big[[a_2,a_5],\big[a_4,[a_1,a_3] \big] \Big].
\end{equation}
The resulting codeword lengths are 
$\len(a_2) = \len(a_5) = 4 - 2 = 2$,
$\len(a_4) = 3 - 1 = 2$, and
$\len(a_1) = \len(a_3) = 3 - 0 = 3$.

A similar scan can be applied to obtain codewords.
The codeword for symbol $a_i$ can be produced, in reverse order, by 
scanning $\theta_1^{(n-1)}$ starting at $a_i$ and proceeding to the right.
Occurrences of other alphabet symbols are ignored. Also, 
whenever a ``$\,[\,\,$'' is found, anything up to (and including) the matching ``$\,\,]\,$'' is ignored.
A zero is produced for each ``$\,\,]\,$'' found when immediately preceded by ``$\,,\,$''
and a one otherwise.
This scan yields
$110$,
$00$,
$111$,
$10$, and
$01$ for $a_1$ to $a_5$, respectively.

It is worth noting that the condition $j<k$ in step~2 avoids generating certain codes that are equivalent within a permutation to the codes produced by the algorithm. This facilitates
complexity reductions as discussed in future sections.

To apply Algorithm~\ref{a1} to a source $A$, the underlying probability distribution must be fully specified, which is contrary to the objective of this work --- operating with only partially-known probability distributions.
In Section~\ref{sect2}, we address the issue of partially-known probability distributions 
for alphabets of given size $n$. In Section~\ref{sect3}, we consider the case of alphabets of size up to $n$, and provide a general bound for arbitrary alphabet sizes. In 
Section~\ref{sect4} we provide an efficient method to calculate the general bound,
and in Section~\ref{sect5} some examples are shown. Finally,
conclusions are drawn in Section~\ref{sect6}.

\section{A Bound on Huffman Code Redundancy} \label{sect2}

Suppose now that the underlying probability distribution of a discrete memoryless source $A$ of $n\ge 2$ symbols has $m$ symbols with known probabilities, $0 \le m \le n$, whereas the probability is unknown for the remaining $n-m$ symbols, $0 \le n-m \le n$.%
\footnote{When $m=n$, there are no unknown probabilities, and the exact redundancy can be calculated directly from~\eqref{a1}. Similarly, when $m=n-1$, it is possible to compute the one ``unknown'' probability (since probabilities must sum to $1$), 
and again, the exact redundancy can be computed via~\eqref{a1}.
Nevertheless, the bounds proposed herein can be computed in both cases, and yield the redundancy that would be obtained by~\eqref{a1}.}
Notation and terminology for these partially-known sources are established in the following definitions.

\begin{definition}[Sub-source] %$ $ \newline
Let $A$ be a source with alphabet $\{a_1,\dots,a_n\}$ and associated probabilities $\p(a_i)$. We define a sub-source of $A$ as a subset of the symbols of the alphabet of $A$ together with their associated probabilities $\p(a_i)$. 
If $X=\{x_1,\dots,x_m\}$, $0\le m\le n$, is a sub-source of $A$, we write $X \sqsubseteq A$. The cardinality $m$ of the sub-source is denoted by $|X|$. We note that the probabilities of a sub-source do not necessarily add up to $1$.
\end{definition}

\begin{definition} [Complementary Sub-source] %$ $ \newline
Given a sub-source $X$ of a source $A$, the complementary sub-source $Y$ of $X$ with respect to $A$ is the sub-source containing all the symbols of $A$ not in $X$ together with their associated probabilities.
\end{definition}

A source $A$ for which some symbol probabilities are known can be thought of
as two complementary sub-sources: a sub-source $X$ holding all symbols for which probabilities
are known, and a complementary sub-source $Y$ holding all symbols for which probabilities
are unknown.

In the remainder of this section we describe how to obtain a bound on the lowest possible redundancy obtainable by a Huffman code for a source $A$ of $n\ge 2$ symbols, with $m$ of the probabilities being known, $0\le m\le n$. We do this by considering a sub-source $X$ of $A$, with $|X|=m$. We then find the lowest possible redundancy obtainable by a Huffman code over all sources $B$ such that $X \sqsubseteq B$
and $|B| = |A| = n$. We formalize this bound in the definition below.

\begin{definition} [Redundancy Bound for Sources of $n$ Symbols] % $ $ \newline
Let $X$ be a sub-source of $m$ symbols for a source $A$ of $n$ symbols, with $0\le m \le n$ and $n \ge 2$. Then
\begin{equation}
\mathcal{R}^{(n)}_\text{min}(X) = \min_{ \big\{ B\, \bigm\vert X \sqsubseteq B,\, |B| = n \big\} } 
\big\{\mathcal{R}_H(B) \big\}.
\label{eq:rmin_def}
\end{equation}
\end{definition}

Clearly, $\mathcal{R}^{(n)}_\text{min}(X)$ is a lower bound to the redundancy obtainable by a Huffman code for $A$. In the following we describe a method to compute $\mathcal{R}^{(n)}_\text{min}(X)$.

\begin{theorem}\label{t1}
$\mathcal{R}^{(n)}_\text{min}(X)$ can be obtained as
\begin{equation}
\mathcal{R}^{(n)}_\text{min}(X) =
\min_{C \in \Phi^{(n)}} 
\Big\{
\min_{ \big\{ B\, \bigm\vert X \sqsubseteq B,\, |B| = n \big\} } 
\big\{
\mathcal{R}_C(B) \big\} \Big\},
\label{eq:rmin_calc}
\end{equation}
where $\Phi^{(n)}$ is the set of all possible Huffman codes  
that can be generated by Algorithm~\ref{a1} for sources of $n$ symbols.
\label{t:separation}
\end{theorem}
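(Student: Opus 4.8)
The plan is to reduce the claim to the following pointwise fact and then to a trivial exchange of minima: for \emph{every} source $B$ with $|B|=n$,
\[
\mathcal{R}_H(B) \;=\; \min_{C \in \Phi^{(n)}} \mathcal{R}_C(B).
\]
Granting this, one starts from the definition~\eqref{eq:rmin_def} of $\mathcal{R}^{(n)}_\text{min}(X)$, replaces $\mathcal{R}_H(B)$ by the right-hand side above for each feasible $B$, and observes that the resulting expression $\min_{B}\!\big(\min_{C}\mathcal{R}_C(B)\big)$ is a single minimum over the set of pairs $(C,B)$ with $C\in\Phi^{(n)}$, $X\sqsubseteq B$ and $|B|=n$; hence the two minima may be interchanged, which is exactly~\eqref{eq:rmin_calc}. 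The argument is uniform in $m$, so it also covers the degenerate cases $m=n$ and $m=n-1$ noted in the footnote.

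To establish the pointwise fact I would prove the two inequalities separately. For ``$\le$'': by construction every $C\in\Phi^{(n)}$ is a prefix-free code with exactly $n$ codewords, hence a legitimate code for the $n$ symbols of $B$ that operates one symbol at a time; the optimality of Huffman codes recalled in Section~\ref{sect1} then gives $\mathcal{R}_C(B)\ge\mathcal{R}_H(B)$, and this is preserved on taking the minimum over $C\in\Phi^{(n)}$. For ``$\ge$'' (that the lower bound is attained inside $\Phi^{(n)}$): running Algorithm~\ref{a1} on $B$ itself produces a code $C^\star$ which by definition of $\Phi^{(n)}$ lies in $\Phi^{(n)}$ and which is a Huffman code for $B$, so $\mathcal{R}_{C^\star}(B)=\mathcal{R}_H(B)$. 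The two inequalities together give the identity; they also show the relevant minima are attained, and since $\Phi^{(n)}$ is finite and nonempty, the outer minimization over codes is well posed as well.

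The only real care needed --- and the step I expect to be the main obstacle --- is bookkeeping about codeword-to-symbol assignments: one must fix, once and for all, what $\mathcal{R}_C(B)$ means when the code $C\in\Phi^{(n)}$ was produced from a source other than $B$, and check that this convention is consistent with the one under which $\mathcal{R}_H(B)$ denotes the Huffman redundancy of $B$. Relatedly, because the condition $j<k$ in step~2 of Algorithm~\ref{a1} prunes codes that are permutation-equivalent to generated ones, one must make sure that the particular code $C^\star$ needed in the ``$\ge$'' part is actually generated; this is handled by the observation that $C^\star$ is obtained by running the algorithm on $B$ with $B$'s own ordering of symbols, so the codeword-to-symbol assignment it induces is precisely the one realizing $\mathcal{R}_H(B)$ --- no appeal to any pruned permutation is required.
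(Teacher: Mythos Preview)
Your proposal is correct and follows essentially the same route as the paper: establish the pointwise identity $\mathcal{R}_H(B)=\min_{C\in\Phi^{(n)}}\mathcal{R}_C(B)$ for every $B$ with $|B|=n$, substitute into the definition~\eqref{eq:rmin_def}, and then interchange the two minima. Your treatment is in fact more careful than the paper's, which simply asserts the pointwise identity from Huffman optimality and the definition of $\Phi^{(n)}$ without spelling out the two inequalities or the symbol-assignment bookkeeping you flag.
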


\begin{proof}
As Algorithm~\ref{a1} always yields an optimal code~\cite{H52},
a Huffman code is included in $\Phi^{(n)}$ for every $B$, $|B|=n$. Otherwise,
there would be a source $B$ for which Algorithm~\ref{a1} does not yield the optimal code.
Hence, 
\begin{equation}
\mathcal{R}_H(B)  = \min_{C \in \Phi^{(n)}} \big\{ \mathcal{R}_C(B) \big\},
\end{equation}
which substituted into \eqref{eq:rmin_def} yields
\begin{equation}
\mathcal{R}^{(n)}_\text{min}(X) = \min_{ \big\{ B\, \bigm\vert X \sqsubseteq B,\, |B| = n \big\} } 
\Big\{
\min_{C \in \Phi^{(n)}} \big\{
\mathcal{R}_C(B) \big\} \Big\}.
\end{equation}
The minimization operations can be permuted, which yields~\eqref{eq:rmin_calc}.
\end{proof}

While its proof is simple,
the implication of Theorem~\ref{t1} is significant. In particular, it allows the problem of
calculating $\mathcal{R}^{(n)}_\text{min}(X)$ to be decomposed into the following two problems which can be solved sequentially:
\begin{itemize}
\item[(a)]
the problem of obtaining $\Phi^{(n)}$, and
\item[(b)]
the problem of obtaining the lowest redundancy 
achievable for each $C \in \Phi^{(n)}$:
\begin{equation}
\min_{ \big\{ B\, \bigm\vert X \sqsubseteq B,\, |B| = n \big\} } \big\{ \mathcal{R}_C(B) \big\}.
\label{eq10}
\end{equation}
\end{itemize}

We address problem (a) in Subsection~\ref{phi_n}, where we show how to obtain $\Phi^{(n)}$ through exhaustive enumeration, obtaining feasible code structures while implicitly enforcing integer codeword lengths. As for problem (b), we show in Subsection~\ref{sec:minimization} that it can be posed as a constrained convex optimization problem and solved accordingly. In subsequent sections we show how to obtain a dramatically smaller, yet still sufficient, subset of $\Phi^{(n)}$.

\begin{corollary}
The right-hand side of \eqref{eq:rmin_calc} is a tight lower bound to the redundancy obtainable by a Huffman code for $A$ given $X$.
\end{corollary}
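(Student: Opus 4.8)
The plan is to unpack what ``tight lower bound'' means and verify both halves. For the lower-bound part, observe that $X$ is a sub-source of $A$ with $|X| = m$ and $|A| = n$, so $A$ itself is one of the sources $B$ satisfying $X \sqsubseteq B$ and $|B| = n$. Hence the minimum in \eqref{eq:rmin_def} is taken over a set that includes $A$, which immediately gives $\mathcal{R}^{(n)}_\text{min}(X) \le \mathcal{R}_H(A)$. By Theorem~\ref{t1}, the right-hand side of \eqref{eq:rmin_calc} equals $\mathcal{R}^{(n)}_\text{min}(X)$, so it too is a lower bound to $\mathcal{R}_H(A)$. Since this holds whatever the (unknown) probabilities of the complementary sub-source $Y$ turn out to be --- the set $\{B \mid X \sqsubseteq B,\, |B| = n\}$ ranges over all completions --- it is a valid lower bound for every source consistent with the partial information $X$.

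For tightness, I would argue that the bound is attained: there is a source $B^\star$ with $X \sqsubseteq B^\star$, $|B^\star| = n$, that achieves equality. Indeed, by definition of the minimum in \eqref{eq:rmin_def}, some $B^\star$ in the feasible set realizes $\mathcal{R}_H(B^\star) = \mathcal{R}^{(n)}_\text{min}(X)$ (the set is nonempty since $A$ is in it, and one should note that the minimum is in fact achieved --- this is where a brief appeal to the structure developed for problem~(b) in Subsection~\ref{sec:minimization}, namely that each inner minimization in \eqref{eq10} is a convex program over a compact feasible region, guarantees the infimum is a genuine minimum). Consequently no larger value can serve as a lower bound valid for all consistent sources, since $B^\star$ is itself such a source and its Huffman redundancy equals the claimed bound.

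Combining the two parts: the right-hand side of \eqref{eq:rmin_calc} is $\le \mathcal{R}_H(B)$ for every $B$ consistent with $X$ (lower bound), and it equals $\mathcal{R}_H(B^\star)$ for at least one such $B^\star$ (tightness), which is precisely the assertion of the corollary.

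The main obstacle is the tightness claim, and specifically the subtlety that it presupposes the minimum in \eqref{eq:rmin_def} --- equivalently, the nested minima in \eqref{eq:rmin_calc} --- is actually attained rather than merely approached. The outer minimization over $C \in \Phi^{(n)}$ is over a finite set and causes no difficulty, but each inner minimization ranges over probability completions of $X$, an infinite set; one must invoke compactness of the relevant simplex slice together with continuity (or at least lower semicontinuity) of $\mathcal{R}_C$ on the region where the ordering constraints defining code $C$ hold, as is made precise when problem~(b) is recast as a convex optimization problem. Granting that, the corollary is essentially a restatement of Theorem~\ref{t1} together with the elementary observation $A \in \{B \mid X \sqsubseteq B,\, |B| = n\}$, and the proof is short.
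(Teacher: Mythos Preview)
Your argument is essentially correct and follows the same line as the paper: the lower-bound half is immediate from the definition together with Theorem~\ref{t1}, and tightness is delegated to the inner optimization of Subsection~\ref{sec:minimization}. The paper's proof is terser and argues only tightness, observing that for the minimizing $C \in \Phi^{(n)}$ the convex problem in~(b) returns explicit values $\p(b_i)$, hence an actual source $B$ lying on the bound.

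One technical point worth tightening: your appeal to ``compactness of the relevant simplex slice'' is not quite right, since the paper requires $\p(y_i) > 0$, so the feasible region in \eqref{eq:convex_problem} is an open simplex and a pure compactness/continuity argument does not apply directly. The paper sidesteps this entirely by \emph{constructing} the minimizer via Lagrange multipliers (equation~\eqref{eq22}) and then checking that it satisfies $0 < \p(y_i) \le 1$. Your own hedge, ``as is made precise when problem~(b) is recast as a convex optimization problem,'' already points in this direction; you should simply say that Subsection~\ref{sec:minimization} exhibits the minimizer explicitly, rather than invoking compactness.
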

\begin{proof}
For the code $C\in \Phi^{(n)}$ that minimizes the outer minimization operation
in~\eqref{eq:rmin_calc}, 
solving the convex optimization problem in (b) yields both 
\eqref{eq10}
and the values of $\p(b_i)$.
Thus there exists a source $B=\{b_1,\dots,b_n\}$ for which $\mathcal{R}_C(B)$ lies on the bound.
\end{proof}

\newcommand{\ua}{\mathcal{U}(A)}
\newcommand{\ub}{\mathcal{U}(B)}
\newcommand{\ka}{\mathcal{K}(A)}
\newcommand{\kb}{\mathcal{K}(B)}

\subsection{All Huffman Codes of $n$ Codewords}\label{phi_n}

The following algorithm generates a set of prefix-free codes of $n$ codewords
which includes all possible Huffman codes that can be produced by Algorithm~\ref{a1}. 
The algorithm explores every possible state trajectory $\Theta^{(0)},\ldots,\Theta^{(n-1)}$.
It does so by iterating over a set $\Phi^{(i)}$ containing all possible states $\Theta^{(i)}$
at iteration $i$.

\begin{algo} % $ $\newline
Let $A=\{a_1,\dots,a_n\}$ be an alphabet of $n\ge 2$ symbols, and let
$\Phi^{(i)}$ be a set of states.

\begin{enumerate}
\item Let $\Phi^{(1)} = \{ A \}$ and set $i \gets 1$.
\item Produce $\Phi^{(i+1)}$ by considering all possible 
applications of the state transition function $\mathrm{h}(\cdot)$
to all states in~$\Phi^{(i)}$. I.e.,
$$\Phi^{(i+1)} = \left\{\mathrm{h}(\Theta, j, k) 
\mid \Theta \in \Phi^{(i)},\,\,
 1 \le j < k \le |\Theta| %
  \right\}.$$
\item Set $i \gets i + 1$.
\item If $i < n$, go to step 2.
\item Stop.
\end{enumerate}
\label{a2}
\end{algo}

The result of this algorithm is $\Phi^{(n)}$, which is a set of states. 
Each such state is an alphabet containing a single symbol $\theta$ composed by the repeated merging of the original symbols. As  
described previously with respect to Algorithm~\ref{a1},
each such~$\theta$
is equivalent to a prefix-free code in the form of a tree. 

For example, for the case of $A=\{a_1,a_2,a_3\}$, 
\begin{equation}
\Phi^{(1)} = \big\{ \{a_1,a_2,a_3\} \big\},
\end{equation}
\begin{equation}
\Phi^{(2)} = \big\{ 
\{a_3,[a_1,a_2]\},
\{a_2, [a_1,a_3]\},
\{a_1, [a_2,a_3]\}
\big\},
\end{equation}
and
\begin{equation}
\Phi^{(3)} = \big\{ 
\{[a_3,[a_1,a_2]]\},
\{[a_2, [a_1,a_3]]\},
\{[a_1, [a_2,a_3]]\}
\big\}.
\end{equation}

Examination of $\Phi^{(3)}$ yields three different prefix-free codes.
Each code has two codewords of length~$2$ and one of length~$1$.

We note that Algorithm 2, in fact, generates the set of all prefix-free codes for alphabets of $n$ symbols (discounting for equivalent permutations).
It also introduces the concept of state transitions which is used extensively below.

With some abuse of notation, we employ $\Phi^{(n)}$ to denote the set of associated prefix-free codes employed in the outer minimization of~\eqref{eq:rmin_calc} in Theorem~\ref{t:separation}. In the following subsection, we tackle the inner minimization.

\subsection{Convex Optimization}\label{sec:minimization}

Given a sub-source $X$ and a code $C \in \Phi^{(n)}$,  we denote the inner minimization of~\eqref{eq:rmin_calc} by $\mathcal{F}(X,C)$. That is,
\begin{equation}
\mathcal{F}(X,C)=\min_{ \big\{ B\, \bigm\vert X \sqsubseteq B,\, |B| = n \big\} } \big\{ \mathcal{R}_C(B) \big\}.
\label{eq:14}
\end{equation}
Simply put, for a given prefix-free code $C$ having known codeword lengths $0<\len_C(b_i)<\infty$, and a sub-source $X$ having known (strictly positive) probabilities,
we seek a source $B$ for which $C$ achieves minimum redundancy, under the constraint that $X$ is a sub-source of $B$.

Let $X=\{x_1,\dots,x_m\}$ 
and let  $Y=\{y_1,\dots,y_{m-n}\}$ be the complementary sub-source of $X$ with respect to $B$.  For now, we assume that there is at least one unknown probability, i.e., $m<n$ and address the case when $m=n$ later. 

From~\eqref{a1}, the objective function $\mathcal{R}_{C}(B)$ is
\begin{equation}
\mathcal{R}_{C}(B)=\sum_{i=1}^n \p(b_i) \cdot \len_{C}(b_i) + \sum_{i=1}^n \p(b_i)\cdot \log \p(b_i).
\label{a14}
\end{equation}
Since the codeword lengths $\len_C(b_i)$ are known, 
and the probabilities $\p(b_i)$ are known for the symbols that lie in $X \sqsubseteq B$,
\eqref{a14}
can be rewritten as
\begin{equation}
\mathcal{R}_{C}(B)= \beta_0 +
\sum_{i=1}^{n-m} \p(y_i)\cdot \big(\beta_i + \log \p(y_i) \big),
\label{eq:rewritten}
\end{equation}
where $\beta_0$ and $\beta_i$ are constants, given by
\begin{equation}
\beta_0 = \sum_{i=1}^m \p(x_i) \cdot \big( \len_C(x_i) + \log \p(x_i) \big),
\end{equation}
and
\begin{equation}
\beta_i = \len_C(y_i) \text{\hspace{2.5em}for } 1 \le i \le n - m.
\end{equation}
Defining the additional constant
\begin{equation}
\beta_T = 1 - \sum_{i=1}^m {\p(x_i)},
\label{eq:betat}
\end{equation}
the minimization problem can be formally posed as
the following inequality constrained convex minimization problem:
\begin{equation}
\begin{aligned}
\text{minimize\hspace{2em}} & \beta_0 + \sum_{i=1}^{n-m} \p(y_i)\cdot \big(\beta_i + \log \p(y_i)\big) \\
\text{subject to\hspace{2em}} & 0 < \p(y_i) \le 1, \text{\hspace{2em}} 1 \le i \le n-m \\
\text{and\hspace{2em}} & \sum_{i=1}^{n-m} {\p(y_i)} = \beta_T.
\end{aligned}
\label{eq:convex_problem}
\end{equation}
It is worth noting that $\beta_T$ represents the total probability of the symbols with unknown probabilities. Under our current assumption that $m<n$,
$\beta_T$ is strictly positive.

To solve the constrained minimization problem, 
we initially ignore the inequality constraints and proceed via the method of Lagrangian multipliers (as in~\cite{MPK06}).

Setting
\begin{multline}
\frac{\partial}{\partial\, \p(y_i)}
\left( \beta_0 + \sum_{j=1}^{n-m} \p(y_j)\cdot \Big(\beta_j + \log \p(y_j)\Big) \right) 
\\
 - \lambda \cdot 
\frac{\partial}{\partial\, \p(y_i)} 
 \left(\sum_{j=1}^{n-m} {\p(y_j)} - \beta_T\right) = 0,
\end{multline}
and solving for the Lagrange multiplier, we get
$\lambda = \beta_i + \log \p(y_i)  + \log \mathrm{e}$,
which can be rearranged into 
\begin{equation}
\p(y_i) = 2^{\lambda-\beta_i} / \mathrm{e}.
\label{eq18_py}
\end{equation}
Substituting this last expression into the equality constraint we obtain
$\sum_{i=1}^{n-m} {2^{\lambda-\beta_i} / \mathrm{e}} = \beta_T$, or
\begin{equation}
2^\lambda = \frac{\beta_T \cdot \mathrm{e}}{\sum_{i=1}^{n-m} 2^{-\beta_i}}.
\label{eq19_py}
\end{equation}
Substituting~\eqref{eq19_py} into~\eqref{eq18_py} yields
\begin{equation}
\p(y_i) = 2^{-\beta_i}\cdot \frac{\beta_T}{\sum_{j=1}^{n-m} 2^{-\beta_j}}.
\label{eq22}
\end{equation}
Since $0 < \beta_i < \infty$ for $1 \le i \le n - m$ and $0 < \beta_T \le 1 $, we can see that $0 < \p(y_i) \le 1$. Hence, we have obtained a solution to the problem posed in~\eqref{eq:convex_problem}.

Substituting~\eqref{eq22} into~\eqref{eq:rewritten} yields the minimum redundancy: 
\begin{equation}
\mathcal{F}(X,C)= \beta_0 + \beta_T \cdot \log \frac{\beta_T}{\sum_{i=1}^{n-m} 2^{-\beta_i}}.
\label{eq:closed-form}
\end{equation}

It is easy to see that the algorithm discussed above is still correct for the case of $m=n$ that was left untackled above. In this case, \eqref{eq:rewritten} and~\eqref{eq:14} reveal that $\mathcal{F}(X,C)=\beta_0$. Noting that~\eqref{eq:betat} gives $\beta_T = 0$ and defining $0\log 0$ to be $0$ (as is often done in the literature), the convex optimization result of~\eqref{eq:closed-form} also yields $\mathcal{F}(X,C)= \beta_0$.

At this point, we have a method for computing a tight lower bound to the redundancy of a Huffman code for a source having $n \ge 2$ symbols, where the occurrence probability is known for only $m$ of the symbols, $0\le m \le n$. This method consists of applying the convex optimization described above to each code in $\Phi^{(n)}$, then taking the minimum over all such codes, as described by~\eqref{eq:rmin_calc}.

We would like to remark that, given $X$ and $C$, the result of \eqref{eq:closed-form} can be expressed as a
closed-form number, and thus the result of 
\eqref{eq:rmin_calc} is precisely calculable by a computer.

In subsequent sections, we discuss some practical issues, including the reduction of complexity that may arise when the cardinality of $\Phi^{(n)}$ is large. Before proceeding to these issues however, we generalize the method to the case when the size of the source is unknown. 

\section{General Bound} \label{sect3}

In this section, we formalize a method to obtain a bound on the minimum redundancy achievable by a Huffman code for a source of arbitrary size. That is, the bound holds for all $n \ge 2$ such that $n\ge m$. As before, $m$ is the number of symbols with known occurrence probabilities.

We begin with a simpler bound, $\mathcal{R}^{(\le n)}_\text{min}(X)$, 
on the lowest possible redundancy obtainable by a Huffman code
for any source $A$ having from $2$ to $n$ symbols, where $X \sqsubseteq A$ is the sub-source of $A$ containing the $m$ symbols with known probabilities.

\begin{definition} [Redundancy Bound for Sources of up to $n$ Symbols] % $ $ \newline
Let $X$ be a sub-source of size $m$ for a source $A$, where $2 \le |A| \le n$.
Then
\begin{equation}
\mathcal{R}^{(\le n)}_\text{min}(X) = 
\min_{ \big\{ B\, \bigm\vert X \sqsubseteq B,\, 2 \le |B| \le n \big\} } 
\big\{\mathcal{R}_H(B) \big\}.
\end{equation}
\end{definition}

It is straightforward to see that $\mathcal{R}^{(\le n)}_\text{min}(X)$ 
can be obtained as 
\begin{equation}
\mathcal{R}^{(\le n)}_\text{min}(X) = 
\min_{ \big\{ k \, \bigm\vert m \le k,\,\, 2\le k \le n \big\}} \big\{
\mathcal{R}^{(k)}_\text{min}(X)
\big\},
\end{equation}
and that it is in fact tight. This simple bound paves the ground to obtain the desired bound, $\mathcal{R}^{*}_\text{min}(X)$, which is a tight bound 
on the lowest possible redundancy obtainable by a Huffman code
regardless of the cardinality of $A$.

\begin{definition}[General Redundancy Bound for Any Source] % $ $ \newline
Let $X$ be a sub-source for a source $A$. We define
\begin{equation}
\mathcal{R}^{*}_\text{min}(X) = 
\min_{ \big\{ B\, \bigm\vert X \sqsubseteq B \big\} } 
\big\{\mathcal{R}_H(B) \big\}.
\label{eq:grb}
\end{equation}
\end{definition}

The calculation of $\mathcal{R}^{*}_\text{min}(X)$ is, in fact, the main objective of this manuscript. In the following theorem, we show that $\mathcal{R}^{*}_\text{min}(X)$ is equal to $\mathcal{R}^{(\le n)}_\text{min}(X)$ for all values of $n$ greater than or equal to a certain threshold $\mathcal{T}(X)$.

\newcommand{\rminstx}{\mathcal{R}^{\left(\le \mathcal{T}(X)\rule{0pt}{1.35ex} \right)}_\text{min}(X)}

\begin{theorem} % $ $ \newline
$\mathcal{R}^{*}_\text{min}(X)$ can be obtained
as 
\begin{equation}
\mathcal{R}^{*}_\text{min}(X) = \rminstx,
\end{equation}
where
\begin{equation}
\mathcal{T}(X)=|X| + \left\lceil \frac{1 - \sum_{i=1}^{m} \p(x_i)}{
\min\big\{\p(x_1),\dots,\p(x_m)\big\} \rule{0pt}{2.3ex} 
} \right\rceil,
\label{a27}
\end{equation}
with the particular case of $\mathcal{R}^{*}_\text{min}(X)=0$ when $|X| = 0$.
\label{t:rmin_star}
\end{theorem}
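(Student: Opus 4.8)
The plan is to show that once the source $B$ is large enough, adding still more symbols cannot lower the minimum redundancy below what is already achievable with at most $\mathcal{T}(X)$ symbols; hence the infimum in~\eqref{eq:grb} is attained (and stabilizes) at cardinality $\mathcal{T}(X)$. The case $|X|=0$ is immediate, since the zero-redundancy code (e.g., a single-symbol or dyadic source) is always in the feasible set, so assume $m=|X|\ge 1$. The key quantitative observation is that the symbols with known probabilities impose a hard lower bound on how many unknown symbols a feasible $B$ can contain: if $B\sqsupseteq X$ and $|B|=k$, then the $k-m$ unknown symbols carry total probability $\beta_T = 1-\sum_i \p(x_i)$, and \emph{each} of them can have probability at most $\min_i \p(x_i)$ only if $k-m$ is not too small — wait, that is the wrong direction. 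The right statement is the reverse: there is a \emph{maximum} useful number of symbols, because beyond a certain point any additional symbol must be assigned vanishingly small probability, and we must argue this does not help. So the real content is a monotonicity/saturation argument, not a counting bound, and the formula for $\mathcal{T}(X)$ must come from bounding how small the smallest \emph{unknown} probability is forced to be.

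First I would make precise the mechanism by which large $k$ fails to help. Fix a code $C\in\Phi^{(k)}$ for some large $k$, and consider the optimal source $B$ from the convex program~\eqref{eq:convex_problem}: by~\eqref{eq22}, each unknown symbol gets $\p(y_i) = 2^{-\beta_i}\beta_T / \sum_j 2^{-\beta_j}$ where $\beta_i = \len_C(y_i)\ge 1$. The point is that in an optimal Huffman configuration the long codewords are assigned to the low-probability symbols; since every $\p(x_i)\ge \min_i\p(x_i) =: p_{\min}$, a symbol with codeword length $\ell$ in a Huffman tree satisfies roughly $\p \le 2^{-\ell}\cdot(\text{something})$, and conversely a valid Huffman code cannot have too many symbols of probability at least $p_{\min}$ without forcing the unknown symbols' probabilities below what the tree structure permits. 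I would formalize the following claim: for any $B\sqsupseteq X$ with $|B|=k > \mathcal{T}(X)$, there is a source $B'\sqsupseteq X$ with $2\le |B'|\le \mathcal{T}(X)$ and $\mathcal{R}_H(B')\le \mathcal{R}_H(B)$. The construction of $B'$ merges the excess small-probability unknown symbols (which, by the ceiling in~\eqref{a27}, must be at least two of them and must have total probability $\le p_{\min}$, hence can be absorbed into one unknown symbol without violating $X\sqsubseteq B'$), and I would verify that merging two leaves of a Huffman tree into their parent does not increase redundancy — this follows because merging reduces $\mathcal{L}$ by the combined probability while reducing $\mathcal{H}$ by at least as much, a standard fact one checks via $\mathcal{R}_H(B') - \mathcal{R}_H(B) = -(\text{entropy of the split}) \le 0$ combined with optimality of Huffman on $B'$. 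The threshold $\mathcal{T}(X)$ is exactly the largest $k$ for which one cannot yet guarantee two excess unknown symbols with small enough combined probability: with $k-m$ unknown symbols of total probability $\beta_T$, the pigeonhole gives two of them summing to at most $2\beta_T/(k-m)$, which is $\le p_{\min}$ precisely when $k-m \ge \lceil \beta_T / p_{\min}\rceil$ is comfortably exceeded — I would need to track the constant in the pigeonhole carefully here, which brings me to the obstacle.

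The main obstacle is getting the exact constant in~\eqref{a27} right rather than an off-by-a-constant version. A naive pigeonhole "two small symbols summing to $2\beta_T/(k-m)$" gives $k-m \approx 2\lceil\beta_T/p_{\min}\rceil$, not the sharper $\lceil\beta_T/p_{\min}\rceil$ appearing in the theorem. To recover the sharp bound I expect one must use more than pigeonhole: specifically, that in an \emph{optimal} Huffman source the $m$ known symbols, each of probability $\ge p_{\min}$, together with the unknowns, cannot all coexist unless the number of unknowns is bounded — the right vehicle is probably a direct argument that if $|B| = k$ with $k - m > \lceil \beta_T/p_{\min}\rceil$, then \emph{some single} unknown symbol already has probability $\le p_{\min}$ and is among the least probable in $B$, so one can delete it (merge it with its Huffman sibling, which is also an unknown or a small known symbol) and descend. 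I would therefore replace the two-symbol pigeonhole with a one-symbol argument: the smallest unknown probability is at most $\beta_T/(k-m)$, and when $k-m \ge \lceil\beta_T/p_{\min}\rceil + 1$ this is $< p_{\min}$, guaranteeing that the globally least-probable symbol of $B$ is an unknown one; merging it with its sibling in the Huffman tree yields $B'$ with $|B'| = k-1$, $X\sqsubseteq B'$, and $\mathcal{R}_H(B')\le\mathcal{R}_H(B)$; iterate down to $|B'|=\mathcal{T}(X)$. Closing the induction then gives $\mathcal{R}^*_\text{min}(X) = \rminstx$, since the feasible set for the right-hand side is a subset of that for the left. The delicate points I would check carefully are (i) that the sibling of the least-probable leaf is itself "small enough" to be merged without the merged symbol exceeding any constraint — here the only constraint is $X\sqsubseteq B'$, which is automatic since we only ever merge unknown symbols or an unknown with its sibling and reassign the combined mass to a single unknown symbol — and (ii) the boundary behaviour of the ceiling, ensuring $\mathcal{T}(X)\ge |X|$ always and $\mathcal{T}(X)\ge 2$ so that $\rminstx$ is well-defined.
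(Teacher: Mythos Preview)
Your overall strategy matches the paper's: for $|B|>\mathcal{T}(X)$, merge the two symbols that the Huffman algorithm would merge first, show this does not increase redundancy (this is the paper's Lemma~\ref{lema:symbols_merged}, which you sketch correctly), and iterate down to $\mathcal{T}(X)$.

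The gap is in your handling of point~(i). For the merge to preserve $X\sqsubseteq B'$, you need \emph{both} of the two least-probable symbols of $B$ to lie in $Y$, not just one. Your one-symbol pigeonhole (``the smallest unknown is at most $\beta_T/(k-m)<p_{\min}$'') only guarantees that the globally smallest symbol is unknown. Its Huffman sibling is the second-smallest symbol of $B$, and nothing in your argument rules out that this is some $x_i\in X$ with $\p(x_i)=p_{\min}$. If that happens, ``reassigning the combined mass to a single unknown symbol'' destroys $x_i$ and $X\not\sqsubseteq B'$; your claim that this is ``automatic'' is simply wrong.

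You dismissed the two-symbol route too quickly. The naive pigeonhole (``two unknowns summing to at most $2\beta_T/(k-m)$'') is indeed too weak, but the correct statement is that when $k-m\ge\lceil\beta_T/p_{\min}\rceil+1$ there are at least \emph{two} unknown symbols each strictly below $p_{\min}$, and this does give the sharp constant in~\eqref{a27}. The paper proves this (Lemma~\ref{lemma:two_under_k}) by a two-step contradiction: first suppose all $\p(y_i)\ge p_{\min}$ and bound $|Y|$; then suppose exactly one $\p(y_j)<p_{\min}$, bound $|Y|$ again using $\sum_{i\ne j}\p(y_i)\ge(|Y|-1)p_{\min}$, and obtain $\beta_T/p_{\min}-\lceil\beta_T/p_{\min}\rceil\ge\p(y_j)/p_{\min}>0$, which is impossible. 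With two unknowns below $p_{\min}$, the first Huffman merge is guaranteed to involve only symbols of $Y$, and your descent goes through.
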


Before proceeding to the proof of Theorem~\ref{t:rmin_star}, some
necessary lemmas are presented.

\begin{lemma}
Given a sub-source  $X$ and its complement $Y$, with respect to source $A$, with $n = |A| > \mathcal{T}(X)$,  $m = |X|\ge 1$,  and $n - m = |Y| \ge 1$, 
there are at least two elements in $Y$ with probabilities 
strictly smaller than $\min\{\p(x_1),\dots,\p(x_m)\}$.
\label{lemma:two_under_k}
\end{lemma}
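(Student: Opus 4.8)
The plan is to prove the lemma by contradiction with a short counting (pigeonhole) argument. I would write $p_{\min} = \min\{\p(x_1),\dots,\p(x_m)\}$ and let $\beta_T = 1 - \sum_{i=1}^{m}\p(x_i)$ denote the total probability carried by the symbols of $Y$; since $|Y|\ge 1$ and every probability is strictly positive, $\beta_T>0$, and since $m\ge 1$, $p_{\min}>0$. Assume, for contradiction, that \emph{at most one} symbol of $Y$ has probability strictly smaller than $p_{\min}$.

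The first step is to unwind the hypothesis $n > \mathcal{T}(X)$. By the definition~\eqref{a27}, this reads $|Y| = n - |X| > \big\lceil \beta_T / p_{\min} \big\rceil$, and since both sides are integers, $|Y| \ge \big\lceil \beta_T/p_{\min}\big\rceil + 1 \ge \beta_T/p_{\min} + 1$. Multiplying by $p_{\min}>0$ gives
\begin{equation}
(|Y| - 1)\, p_{\min} \ \ge\ \beta_T \qquad\text{and therefore}\qquad |Y|\, p_{\min} \ >\ \beta_T .
\label{eq:lemma-margin}
\end{equation}

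The second step is the counting itself. Let $k$ be the number of symbols of $Y$ with probability at least $p_{\min}$; the contradiction hypothesis forces $k \ge |Y| - 1$, so either $k=|Y|$ or $k=|Y|-1$. If $k=|Y|$, then summing probabilities over $Y$ yields $\beta_T \ge |Y|\, p_{\min}$, contradicting the right inequality of~\eqref{eq:lemma-margin}. If $k=|Y|-1$, the one remaining symbol of $Y$ still has strictly positive probability, so $\beta_T > (|Y|-1)\, p_{\min}$, contradicting the left inequality of~\eqref{eq:lemma-margin}. Either way we have a contradiction, hence at least two symbols of $Y$ have probability strictly smaller than $p_{\min}$.

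I do not expect a genuine obstacle: the only delicate point is the integer bookkeeping in the first step — one must use that $n>\mathcal{T}(X)$ is a strict inequality \emph{between integers} to gain the full extra $p_{\min}$ of slack in~\eqref{eq:lemma-margin}, and both branches of the counting step are tailored to consume exactly that slack.
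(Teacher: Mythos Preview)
Your proposal is correct and follows essentially the same approach as the paper: a two-case contradiction argument (either all elements of $Y$ have probability at least $p_{\min}$, or all but one do), combined with the integer bookkeeping that turns $n>\mathcal{T}(X)$ into $|Y|\ge\lceil\beta_T/p_{\min}\rceil+1$. Your presentation is slightly more streamlined in that you first isolate the two inequalities in~\eqref{eq:lemma-margin} and then dispatch each case in one line, whereas the paper carries out the substitutions separately within each case, but the underlying argument is identical.
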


\begin{proof}
By assumption, $|X|, |Y|\ge 1$ and $|A| \ge \mathcal{T}(X) + 1$. So,
\begin{multline}
|Y| = |A| - |X| \ge \mathcal{T}(X) + 1 - |X| = \\
= \left\lceil \frac{1 - \sum_{i=1}^{m} \p(x_i)}{
\min\big\{\p(x_1),\dots,\p(x_m)\big\} \rule{0pt}{2.3ex} 
} \right\rceil + 1.
\label{eq31}
\end{multline}
Since $|Y| \ge 1$, the numerator within the ceiling in~\eqref{eq31} is strictly positive
and thus $|Y|>1$, i.e., $Y$ has at least two elements.

Without loss of generality, we assume $\p(x_1)=\min\{\p(x_1),\dots,\p(x_m)\}$.
Suppose to the contrary that $\p(x_1) \le \p(y_i)$ for all $y_i \in Y$.
Combining $\sum_{i=1}^{m} \p(x_i) + \sum_{i=1}^{n-m} \p(y_i) = 1$ with the previous supposition, it follows that 
\begin{equation}
\sum_{i=1}^{m} \p(x_i) + \left|Y\right| \cdot \p(x_1) \le 1
\end{equation}
and so
\begin{equation}
|Y| \le \frac{1 - \sum_{i=1}^{m} \p(x_i)}{\p(x_1)}.
\label{a29}
\end{equation} 
 From~\eqref{eq31}, 
\begin{equation}
|Y| \ge \mathcal{T}(X) + 1 - |X|.
\label{e30}
\end{equation}
Substituting~\eqref{a29} and~\eqref{a27} into~\eqref{e30} yields
\begin{equation}
\frac{1 - \sum_{i=1}^{m} \p(x_i)}{ \p(x_1)} \ge \left\lceil \frac{1 - \sum_{i=1}^{m} \p(x_i)}{\p(x_1)} \right\rceil + 1,
\end{equation}
which is impossible. Hence, $\p(x_1) > \p(y_j)$ for at least some index $j$.

Suppose now that $\p(x_1) > \p(y_j)$ but that $\p(x_1) \le \p(y_i)$ for all $i \ne j$.
Again, combining $\sum_{i=1}^{m} \p(x_i) + \sum_{i=1}^{n-m} \p(y_i) = 1$ with the supposition, it follows that
\begin{equation}
\sum_{i=1}^{m} \p(x_i) + (\left|Y\right| - 1) \cdot \p(x_1) + \p(y_j) \le 1
\end{equation}
and
\begin{equation}
|Y| \le \big(1 - \sum_{i=1}^{m} \p(x_i) + \p(x_1) - \p(y_j)\big) / \p(x_1).
\label{a32}
\end{equation}
But substituting~\eqref{a32} and~\eqref{a27} into~\eqref{e30} yields
\begin{equation}
\big(1 - \sum \p(x_i) + \p(x_1) - \p(y_j)\big) / \p(x_1) \ge \left\lceil \frac{1 - \sum \p(x_i)}{\p(x_1)} \right\rceil + 1, 
\end{equation}
which implies
\begin{equation}
\frac{1 - \sum \p(x_i)}{\p(x_1)} + \frac{\p(x_1) - \p(y_j)}{\p(x_1)} \ge \left\lceil \frac{1 - \sum \p(x_i)}{\p(x_1)} \right\rceil + 1,
\end{equation}
and 
\begin{equation}
\frac{1 - \sum \p(x_i)}{\p(x_1)} - \left\lceil \frac{1 - \sum \p(x_i)}{\p(x_1)} \right\rceil \ge 1 - \frac{\p(x_1) - \p(y_j)}{\p(x_1)} = \frac{\p(y_j)}{\p(x_1)},
\end{equation}
which is impossible because
the left-hand side is always~$0$ or less, while the right-hand side is strictly positive. Hence, $\p(x_1) > \p(y_i)$ for at least two elements in $Y$.
\end{proof}

\begin{lemma}
Given a source $A$ of $n \ge 3$ symbols
for which symbols 
$a_j, a_k \in A$ are merged in step~3 of Algorithm~\ref{a1},
the source $B = \left( A
\setminus \left\{ a_j, a_k \right\} \right)
\cup \left\{ b_q \right\}$, with
$\p(b_q)=\p(a_j)+\p(a_k)$,
has $\mathcal{R}_H(B)\le \mathcal{R}_H(A)$.
\label{lema:symbols_merged}
\end{lemma}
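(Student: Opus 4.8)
The plan is to split the redundancy as $\mathcal{R}_H(B) = \mathcal{L}_H(B) - \mathcal{H}(B)$ and $\mathcal{R}_H(A) = \mathcal{L}_H(A) - \mathcal{H}(A)$ and compare the two pieces separately. Concretely, I would establish that the average length grows by exactly the probability of the merged symbol, $\mathcal{L}_H(A) = \mathcal{L}_H(B) + \p(b_q)$, while the entropy grows by at most that amount, $\mathcal{H}(A) - \mathcal{H}(B) \le \p(b_q)$. Subtracting yields $\mathcal{R}_H(A) - \mathcal{R}_H(B) = \big(\mathcal{L}_H(A) - \mathcal{L}_H(B)\big) - \big(\mathcal{H}(A) - \mathcal{H}(B)\big) \ge \p(b_q) - \p(b_q) = 0$, which is the claimed inequality. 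Note that because $n \ge 3$, the source $B$ has $n-1 \ge 2$ symbols and is a legitimate source, so $\mathcal{R}_H(B)$ is well defined; the case $n = 2$ would only need the conventions $\mathcal{L}_H(B) = \mathcal{H}(B) = 0$.

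For the length identity I would argue directly from the state-machine description of Algorithm~\ref{a1}. Since $a_j$ and $a_k$ are the pair picked in step~2 and merged in step~3, the state after the first iteration is $\Theta^{(1)} = \mathrm{h}(A,j,k)$, whose alphabet is precisely that of $B$, with $[a_j,a_k]$ playing the role of $b_q$ and the same probability $\p(a_j)+\p(a_k)$. Hence iterations $1,\dots,n-1$ of the run on $A$ coincide with the full run of Algorithm~\ref{a1} on $B$, and the final tree for $A$ is obtained from the final tree for $B$ by replacing the leaf $b_q$ with the internal node that splits into $a_j$ and $a_k$. Reading off codeword lengths as described after Algorithm~\ref{a1}, this gives $\len_H(a_i) = \len_H(b_i)$ for every unmerged symbol and $\len_H(a_j) = \len_H(a_k) = \len_H(b_q) + 1$, so summing over all symbols yields $\mathcal{L}_H(A) = \mathcal{L}_H(B) + \p(a_j) + \p(a_k) = \mathcal{L}_H(B) + \p(b_q)$. (Alternatively, one can invoke the classical fact that some optimal prefix code for $A$ has the two least-probable symbols as siblings at maximum depth and transfer codeword lengths in both directions, using optimality of Huffman codes in each direction.)

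For the entropy part I would use the grouping property of $\mathcal{H}$. Writing $\alpha = \p(a_j)/\p(b_q)$, a direct expansion of $-\sum \p \log \p$ and the substitution $\p(b_q) = \p(a_j) + \p(a_k)$ gives $\mathcal{H}(A) - \mathcal{H}(B) = \p(b_q)\big(-\alpha \log \alpha - (1-\alpha)\log(1-\alpha)\big)$. The binary-entropy factor lies in $[0,1]$ (it is maximized at $\alpha = 1/2$, with value $1$), so in particular $\mathcal{H}(A) - \mathcal{H}(B) \le \p(b_q)$, which is exactly what is needed.

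Combining the two parts completes the proof. I expect the only delicate point to be the length identity $\mathcal{L}_H(A) = \mathcal{L}_H(B) + \p(b_q)$: it relies on the fact that merging the two least-probable symbols and then Huffman-coding the reduced source is equivalent to Huffman-coding the original source directly, which is essentially the core of Huffman's optimality argument and should be justified carefully (either through the state-machine continuation argument above or by a two-sided optimality comparison). The entropy computation and the bound that binary entropy is at most one bit are routine.
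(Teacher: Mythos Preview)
Your argument is correct and follows essentially the same route as the paper's: both establish the codeword-length correspondence $\len_H(a_j)=\len_H(a_k)=\len_H(b_q)+1$ (with all other lengths unchanged) by tracking the Huffman state trajectory, and then reduce the redundancy comparison to the fact that the binary entropy of the split $\p(a_j)/\p(b_q)$ is at most~$1$. The only minor differences are organizational---you separate the $\mathcal{L}$ and $\mathcal{H}$ contributions first and then bound, whereas the paper expands $\mathcal{R}_H(B)\le\mathcal{R}_H(A)$ directly and cancels---and the paper's proof allows $a_j,a_k$ to be merged at an arbitrary iteration~$l$ (arguing that merges before~$l$ coincide for $A$ and $B$ since $\p(b_q)>\p(a_j),\p(a_k)$), while you take the merge to occur at the first iteration; that simplification is harmless here since it is the only case used downstream.
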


\begin{proof}
Let $\Theta^{(i)}_A$ and $\Theta^{(i)}_B$ be the state  of Algorithm~\ref{a1} at iteration~$i$ when it is executed, respectively, for sources $A$ and $B$.
Since $\p(a_j) < \p(b_q)$ and $\p(a_k) < \p(b_q)$,
Algorithm~\ref{a1}
performs the same merging steps for both $A$ and $B$ before the 
algorithm merges $a_j$ and $a_k$ from $A$ into $[a_j, a_k]$
at some iteration $l$. I.e., 
\begin{equation}
 \Theta^{(i)}_A = (\Theta^{(i)}_B \setminus \{b_q\}) \cup \{a_j,a_k\} \hspace{2em}\text{for } i\le l.
\end{equation}
Once $a_j$ and $a_k$ have been merged by the algorithm for $A$, states $\Theta^{(l+1)}_A$ and $\Theta^{(l)}_B$ have an equal number of symbols with identical occurrence probabilities. Thus, both algorithms continue to produce identical outcomes except that ``$[a_j,a_k]$'' is substituted by ``$b_q$'' in the Huffman code for $B$.
This implies that 
\begin{equation}
\len_H(a_j) = \len_H(a_k) = \len_H(b_q) + 1,
\label{e1}
\end{equation} which can be employed to prove that
$\mathcal{R}_H(B) \le \mathcal{R}_H(A)$, or equivalently
\begin{equation}
\sum_{i=1}^{n-1}\p(b_i)\cdot \Big(
\len_H(b_i) + \log \p(b_i) \Big)
\le
\sum_{i=1}^{n}\p(a_i)\cdot \Big( \len_H(a_i)
+ \log \p(a_i) \Big).
\label{e2}
\end{equation}
In the following, a series of operations are carried out, each of which holds only if~\eqref{e2} holds.

Canceling equal terms on both sides in~\eqref{e2} and using~\eqref{e1} yields
\begin{multline}
\p(b_q)\cdot\big(\len_H(b_q)
+ \log \p(b_q) \big) \\
\le 
(\p(a_j)+\p(a_k))\cdot(\len_H(b_q)+1)  \\
+ \p(a_j)\cdot\log \p(a_j) + \p(a_k)\cdot\log \p(a_k) 
\end{multline}
Substituting $\p(a_j) + \p(a_k) = \p(b_q)$ yields
\begin{multline}
 \p(b_q)\cdot\log \p(b_q)
- \p(a_j)\cdot\log \p(a_j) \\
- \p(a_k)\cdot\log \p(a_k) \le
\p(b_q).
\end{multline}
Dividing by $\p(b_q)$ on both sides and rearranging results in
\begin{equation}
- \frac{\p(a_j)}{\p(b_q)}\cdot\log\frac{\p(a_j)}{\p(b_q)} - \frac{\p(a_k)}{\p(b_q)}\cdot\log\frac{\p(a_k)}{\p(b_q)} \le 1.
\label{e4}
\end{equation}
The left hand side of~\eqref{e4} is the entropy $\mathcal{H}(D)$ of a binary source 
$D = \{d_1, d_2\}$, with
$\p(d_1) = \p(a_j) / \p(b_q)$ and
$\p(d_2) = \p(a_k) / \p(b_q)$, which can be no greater than $1$.
\end{proof}

We now proceed to prove Theorem~\ref{t:rmin_star}.

\begin{proof}
Consider first the case of $|X|=0$, when all probabilities are unknown. It is trivial to see that $\mathcal{R}^{*}_\text{min}(X) = 0$, as redundancy must be at least $0$ by definition, and
$B=\{b_1,b_2\}$ with $\p(b_1)=\p(b_2)=0.5$ is an example where $\mathcal{R}_H(B)=0$. 

Suppose now that $|X|>0$. We will show that
for any source $B$ such that $X \sqsubseteq B$ with $|B| > \mathcal{T}(X)$,
there is a source $C$ such that $X \sqsubseteq C$, with $|C| =  \mathcal{T}(X)$ having $\mathcal{R}_H(C) \le \mathcal{R}_H(B)$.
Thus, it is unnecessary to consider any source with $|B| > \mathcal{T}(X)$ in~\eqref{eq:grb}.

Let $Y$ be the complementary sub-source of $X$ with respect to $B$, and let $\p(x_1)=\min\{\p(x_1),\dots,\p(x_m)\}$.
By assumption, $|B| > \mathcal{T}(X)$ and so from~\eqref{a27}, $|X|<|B|$.
So from Lemma~\ref{lemma:two_under_k}, there exist $i$ and $j$ such that $\p(y_i) < \p(x_1)$ and $\p(y_j) < \p(x_1)$.
That is, there exist two
symbols in $Y$ with probabilities smaller than any symbol in $X$. Thus,
the first two symbols merged in step~2 of the Huffman Algorithm, as 
applied to $B$, must come from $Y$.
Without loss of generality we choose $y_i$ and $y_j$ to be these two symbols.

Since $B$ contains $y_i$ and $y_j$, it follows that $|B| \ge |X| + 2 \ge 3$, and so from Lemma~\ref{lema:symbols_merged}, we can conclude that 
the source $C=B \setminus \{y_i,y_j\} \cup \{c\}$ with $\p(c) = \p(y_i) + \p(y_j)$ has $\mathcal{R}_H(C) \le \mathcal{R}_H(B)$, with $X \sqsubseteq C$ and $|C|=|B|-1$.

Simply put, the Huffman code for source $C$ is less redundant than the Huffman code for source $B$. Thus, it is unnecessary to consider source $B$ in the minimization of~\eqref{eq:grb}.

This procedure can be repeated until $|C| = \mathcal{T}(X)$.%
\end{proof}

\section{Efficient Enumeration of Prefix-free Codes} \label{sect4}

The previously presented methods to obtain bounds on Huffman code redundancy
rely on Algorithm~\ref{a2} for the exhaustive enumeration of all possible Huffman codes $\Phi^{(n)}$ 
for an alphabet of size $n$. This enumeration rapidly becomes untenable, since $\left| \Phi^{(n)} \right| =  n!\cdot (n-1)! / 2^{n-1}$, with factorial functions dominating the growth speed. This is even more problematic for the $\mathcal{R}^{*}_\text{min}$ bound, as all codes 
need to be enumerated for all alphabet sizes $n$ up to $\mathcal{T}(X)$. While 
$\mathcal{T}(X)$ may be small when the elements in $X$ have large probabilities, 
\eqref{a27} implies that $\mathcal{T}(X)-|X|$ is inversely proportional to the smallest probability in $X$, and thus $\mathcal{T}(X)$  can be large for small probabilities. For example,
a case as simple as $X=\{x_1\}$ with $\p(x_1)=0.01$ yields $\mathcal{T}(X)=100$ and $\left| \Phi^{(n)} \right| \simeq 10^{284}$. Clearly, $\Phi^{(n)}$ cannot be calculated by exhaustive means.

In this section we show how to reduce the complexity of Algorithm~\ref{a2} 
by enumerating codes in $\Phi^{(n)}$
in a manner that allows us to efficiently eliminate
codes that are provably unnecessary
for the purpose of obtaining $\mathcal{R}^{*}_\text{min}(X)$,
and thus making feasible the actual bound calculation.

This is accomplished by effectively managing the internal states
of the algorithm via the following two strategies.
\begin{enumerate}
\item[(i)]
We disregard
states resulting from the merging of two symbols in 
the sub-source complementary to $X$, as it can be seen through Lemma~\ref{lema:symbols_merged}
that there is a complementary sub-source to $X$ with one less symbol having lower or equal redundancy.
For this reason, we stop considering any state trajectory as soon as there is a merging of symbols which 
does not involve either a symbol resulting from a previous merge or one of the original symbols in $X$.

For example, 
given sub-source $X=\{a_1\}$ with $\p(a_1)=0.4$,
Algorithm~\ref{a2} is employed twice to calculate $\mathcal{R}^*_\text{min}(X)$,
producing $\Phi^{(2)}$ and $\Phi^{(3)}$.
In this case, code $[a_1,[a_2,a_3]]$ is in $\Phi^{(3)}$, but
through Lemma~\ref{lema:symbols_merged} it can be seen that code $[a_1,a'_2]$ in
$\Phi^{(2)}$, with $\p(a'_2) = \p(a_2) + \p(a_3)$, has lower or equal redundancy.
\item[(ii)] 
We keep track of constraints on probabilities that arise during Algorithm~\ref{a2},
which allows us to prune state trajectories that can not yield any viable code.

For example, 
given sub-source $X=\{a_1,a_2\}$ with $\p(a_1)=\p(a_2)=0.4$,
Algorithm~\ref{a2} produces (among others) $\Phi^{(3)}$ over alphabet $A=\{a_1,a_2,a_3\}$.
In this case, code $[[a_1,a_2],a_3]$ is in $\Phi^{(3)}$. For this code,
$a_1$ and $a_2$ are
merged first, which implies that $\p(a_1) \le \p(a_3)$ and $\p(a_2)\le \p(a_3)$.
These inequalities together with $0 \le \p(a_i)\le 1$,
$\sum \p(a_i)=1$, $\p(a_1)=0.4$, and $\p(a_2)=0.4$
yield an inconsistent system of equations, i.e.,  
$\p(a_3)$ must be $0.2$ which is incompatible with $\p(a_1) \le \p(a_3)$.
Thus, the code cannot be a
Huffman code for any source that has $X$ as sub-source,
and it can be ignored in an efficient enumeration of such codes.
\end{enumerate}

In the remainder of this section we describe a 
modified version of Algorithm~\ref{a2} that applies these two strategies
to dramatically reduce implementation complexity.

\subsection{Extended State}

First, we define two concepts that enable 
the application of the aforementioned strategies.
The first definition is a partition of each state employed in Algorithm~\ref{a2}
into \emph{known} symbols and \emph{unknown} symbols.

\begin{definition}[State partition] % $ $ \newline
Let $\Theta$ be a state (alphabet) and let $X$ be a sub-source.
Based on $X$, we define a partition of $\Theta$ into
two complementary subsets:
\begin{itemize}
\item[--] a set of \emph{known} symbols $K$, containing all symbols in 
$\Theta$ that are either in $X$ or that are the result of one or more merging operations, with at least one of the symbols involved being a symbol in $X$; and
\item[--] a set of \emph{unknown} symbols $U = \Theta \setminus K$.
\end{itemize}
Let $\kappa_i$ denote an element in $K$ and $u_i$ denote an element in $U$.
\end{definition}

The second definition is that of an \emph{extended state}.

\begin{definition}[Extended state] % $ $ \newline
A triplet
\begin{equation}
\dddot{\Theta} = (K, s, Z)
\end{equation}
is an extended state, where
$K$ is an alphabet, $s$ is an integer, and $Z$ is a set of linear inequalities. 
Let functions $\mathcal{K}(\cdot)$, $\mathcal{S}(\cdot)$, and $\mathcal{Z}(\cdot)$
denote each of the elements of the triplet.
That is $\mathcal{K}(\dddot{\Theta})=K$, $\mathcal{S}(\dddot{\Theta})=s$, and $\mathcal{Z}(\dddot{\Theta})=Z$.
\end{definition}

Extended states replace states (alphabets) in Algorithm~\ref{a2},
by providing an alternative representation of the alphabet, and augmenting it with constraints.
Specifically, an extended state $\dddot\Theta$ contains 
the set of known symbols $K$ of an equivalent non-extended state $\Theta$ 
together with an integer $s$ 
which is used to indicate the number of symbols that
have been drawn from $U$ at any point in the merging process.
In this formulation, symbols in $U$ can be thought of as being created when needed.
In addition, an extended state also contains a set, $Z$, containing
all inequalities that have arisen through previous merging of symbols, so that an extended state with an inconsistent $Z$ can be discarded according to strategy~(ii).

Following with the idea that symbols in $U$ are created when needed in a merging operation,
we can consider
three cases for merging symbols $\theta_i, \theta_j \in \Theta$ in step~2 of Algorithm~\ref{a2}:
\begin{itemize}
\item[(a)] $\theta_i, \theta_j \in K$,
\item[(b)] $\theta_i \in K$ and $\theta_j \in U$ (or vice-versa), and
\item[(c)] $\theta_i, \theta_j \in U$.
\end{itemize}
Per strategy (i), it is unnecessary to ever consider case~(c). 
One new symbol in $U$ is created each time case~(b) is applied, 
and thus, integer $s$ is equivalent to how many case-(b) merges have been performed.
It is clear that even though the probabilities of the created symbols $u_i$ are unknown,
they must be non-decreasing ($\p(u_i) \le \p(u_{i+1})$) starting with the first symbol drawn from $U$,
which we denote as~$u_0$.

\subsection{State Transition Functions for Extended States}

The state transition function $\mathrm{h}(\cdot)$ is now modified to operate
over extended states
resulting in three different functions, each covering one of the three cases of 
symbol merging operations described in the previous subsection.

For case (a), the state transition function $\mathrm{h}_a\left(\dddot\Theta,i,j\right)$ 
merges elements $\kappa_i$ and $\kappa_j$ in $\mathcal{K}\left(\dddot\Theta\right)$
as follows
\begin{multline}
\mathrm{h}_a\left(\dddot\Theta,i,j\right) = \bigg(
\Big(\mathcal{K}\left(\dddot\Theta\right) \setminus \big\{ \kappa_i, \kappa_j \big\} \Big) \cup \big\{[\kappa_i, \kappa_j]\big\}%
,\,\,\, \\[0pt plus 3pt minus 0pt]
 \mathcal{S}(\dddot\Theta),\,\,\, 
\mathcal{Z}(\dddot\Theta) \cup  \Omega_1 \cup  \Omega_2 \cup  \Omega_3 \cup  \Omega_4
\bigg),
\end{multline}
with
\begin{equation}
\Omega_1 = \Big\{ 
\p(\kappa_i) \le \p(\kappa_l) \Bigm\vert \kappa_l \in \mathcal{K}(\dddot\Theta), l \ne j
\Big\},
\end{equation}
\vspace{0pt plus 3pt minus 0pt}
\begin{equation}
\Omega_2 = \Big\{
\p(\kappa_j) \le \p(\kappa_l) \Bigm\vert \kappa_l \in \mathcal{K}(\dddot\Theta), l \ne i
\Big\},
\end{equation}
\vspace{0pt plus 3pt minus 0pt}
\begin{equation}
\Omega_3 = \Big\{
\p(\kappa_i) \le \p\left(u_{\mathcal{S}(\dddot\Theta)}\right) \Big\},
\end{equation}
and
\vspace{0pt plus 3pt minus 0pt}
\begin{equation}
\Omega_4 = \Big\{
\p(\kappa_j) \le  \p\left(u_{\mathcal{S}(\dddot\Theta)}\right)
\Big\}.
\end{equation}

Sets $\Omega_1$ to $\Omega_4$ contain inequalities 
stating that 
no other symbol has smaller occurrence probability than $\p(\kappa_i)$ and $\p(\kappa_j)$. Elements in $\mathcal{K}(\dddot\Theta)$ are covered in $\Omega_1$ and $\Omega_2$,
while elements (implicitly) in $U$ are covered in $\Omega_3$ and $\Omega_4$.
As elements in $U$ are created in non-decreasing order of probability,
the inequality involving $u_{\mathcal{S}(\dddot\Theta)}$ covers any element in $U$ that may be created in the future.

For case (b), the state transition function  $\mathrm{h}_b\left(\dddot\Theta,i\right)$
merges element $\kappa_i$ in $\mathcal{K}\left(\dddot\Theta\right)$
with a newly created element~$u_{\mathcal{S}(\dddot\Theta)}$ in $U$
as
\begin{multline}
\mathrm{h}_b\left(\dddot\Theta,i\right) = \bigg(
\Big(
\mathcal{K}\left(\dddot\Theta\right) \setminus \big\{ \kappa_i \big\} \Big) \cup \big\{[\kappa_i, u_{\mathcal{S}(\dddot\Theta)}]\big\},\,\,\, \\[0pt plus 3pt minus 0pt]
\mathcal{S}(\dddot\Theta) + 1,\,\,\, 
\mathcal{Z}(\dddot\Theta) \cup  \Upsilon_1 \cup  \Upsilon_2 \cup  \Upsilon_3 \cup  \Upsilon_4
\bigg)
\end{multline}
with
\begin{equation}
\Upsilon_1 = \Big\{ 
\p(\kappa_i) \le \p(\kappa_l) \Bigm\vert \kappa_l \in \mathcal{K}(\dddot\Theta) \Big\},
\end{equation}
\vspace{0pt plus 3pt minus 0pt}
\begin{equation}
\Upsilon_2 = \Big\{
\p\left(u_{\mathcal{S}(\dddot\Theta)}\right) \le \p(\kappa_l) \Bigm\vert \kappa_l \in \mathcal{K}(\dddot\Theta), l \ne i
\Big\},
\end{equation}
\vspace{0pt plus 3pt minus 0pt}
\begin{equation}
\Upsilon_3 = \Big\{
\p(\kappa_i) \le \p\left(u_{\mathcal{S}(\dddot\Theta)+1}\right) \Big\},
\end{equation}
and
\vspace{0pt plus 3pt minus 0pt}
\begin{equation}
\Upsilon_4 = \Big\{
\p\left(u_{\mathcal{S}(\dddot\Theta)}\right) \le  \p\left(u_{\mathcal{S}(\dddot\Theta)+1}\right)
\Big\}.
\label{eq:psi4}
\end{equation}

It is unnecessary to give the state transition function for case (c), as per strategy (i), it only yields extended states not worth considering.

\subsection{Algorithm}

Following the aforementioned strategies and definitions, 
Algorithm~\ref{a2} is modified as follows.

Given a sub-source $X=\{x_1,\dots,x_m\}$,
the algorithm starts at the initial extended state of
$\big(\{x_1,\dots,x_m\},\,\, 0,\,\, \{0 \le \p(u_0) \} \big)$
and considers all possible state trajectories,
through repeated application of the state transition functions $\mathrm{h}_a(\cdot,\cdot,\cdot)$ and
$\mathrm{h}_b(\cdot,\cdot)$.

Once a given extended state $\dddot\Theta$ is reached with $|\mathcal{K}(\dddot\Theta)| = 1$, $\mathcal{K}(\dddot\Theta)$ contains a single symbol, which is equivalent to a prefix-free code of $|X| + \mathcal{S}(\dddot\Theta)$ codewords.
The redundancy for this code is lower bounded as described in Subsection~\ref{sec:minimization}.
State transition function $\mathrm{h}_a$ cannot be further applied to $\dddot\Theta$,
but state transition function $\mathrm{h}_b$ may still be applied, producing 
a new state $\dddot\Theta'$
representing a prefix-free code of $|X| + \mathcal{S}(\dddot\Theta) + 1$ codewords.
Thus, finding an extended state equivalent to a prefix-free code, does not terminate a state trajectory.

State trajectories are terminated when $|X| + \mathcal{S}(\dddot\Theta) > \mathcal{T}(X)$ (as per Theorem~\ref{t:rmin_star}); thus
the algorithm is guaranteed to stop. In addition, as per strategy~(ii),
state trajectories are terminated when $\mathcal{Z}(\dddot\Theta)$ becomes inconsistent, significantly limiting the number of extended states to examine.

These modifications yield a new algorithm capable of 
generating a set of codes sufficient to calculate
 $\mathcal{R}^*_\text{min}(\cdot)$ without having to fully enumerate $\Phi^{(n)}$.

\begin{algo} % $ $\newline
Let $X=\{x_1,\dots,x_m\}$ be a given sub-source of $m$ symbols, let $\Psi$ be a set of extended states and let $i$ be an iteration index.
\begin{enumerate}
\item Let $\dddot\Phi^{(0)} = \Big\{ \big(\{x_1,\dots,x_m\},\,\, 0,\,\, \{0 \le \p(u_0) \} \big) \Big\}$, set $\Psi \gets \emptyset$, and set $i \gets 0$.
\item Let $\Pi_1= \Big\{\mathrm{h}_a(\dddot\Theta, j, k) 
\Bigm\vert \dddot\Theta \in \dddot\Phi^{(i)},\,\,
 1 \le j < k \le |\mathcal{K}(\dddot\Theta)|
  \Big\}$.
\item Let $\Pi_2 = 
\Big\{\mathrm{h}_b(\dddot\Theta, j) 
\Bigm\vert \dddot\Theta \in \dddot\Phi^{(i)},\,\,
1 \le j \le |\mathcal{K}(\dddot\Theta)|,\,\,
|X| + \mathcal{S}(\dddot\Theta) < \mathcal{T}(X)
  \Big\}  
  $.
\item $\dddot\Phi^{(i+1)} = \Big\{\dddot\Theta \Bigm\vert \dddot\Theta \in \Pi_1 \cup \Pi_2$ such that 
$\mathcal{Z}(\dddot\Theta)$ is not inconsistent$\Big\}$
\item $\Psi \gets \Psi \cup \Big\{ \dddot\Theta \Bigm\vert  
\dddot\Theta \in \dddot\Phi^{(i+1)} ,\,\,
|\mathcal{K}(\dddot\Theta)| = 1 \Big\}$
\item Set $i \gets i + 1$.
\item If $\dddot\Phi^{(i)} \ne \emptyset$, go to step 2.
\item Stop.
\end{enumerate}
\label{a3}
\end{algo}

Once Algorithm~\ref{a3} stops, $\Psi$ is a set of extended
states containing 
sufficient codes to yield the least possible redundancy
for every source $B$ having sub-source $X$.
$\mathcal{R}^*_\text{min}$
is then computed by applying the convex optimization of Subsection~\ref{sec:minimization}
to each code in $\Psi$, and taking
the minimum of all such minimization results. This is facilitated by the fact that $\Psi$
is only a small subset of 
\begin{equation}
\bigcup_{i=2}^{\mathcal{T}(X)} \Phi^{(i)}, \label{eq:cupphi}
\end{equation}
which is what would need to be examined in the exhaustive case.

The reduction in the number of elements in $\Psi$ as compared to those in~\eqref{eq:cupphi} is substantial,
even if not easily given in closed form. Following the example at the beginning of this section, for the case of $X=\{x_1\}$ with $\p(x_1)=0.01$, we obtain $|\Psi| = 11$, which is clearly smaller than $\left| \Phi^{(100)} \right| \simeq 10^{284}$ and, by extension, $\left|\bigcup_{i=2}^{100} \Phi^{(i)}\right|$. Examining 
$11$ cases can be carried out in negligible time, while examining $10^{284}$ cases 
at a reasonable rate may take considerably more than the age of the universe.
We show further evidence of the efficiency of the proposed method in Section~V.
Note, however, that we do not aim to provide a theoretical complexity result for 
the proposed algorithm, as the details of such a study would require details of 
the structure of the Huffman code redundancy which we are only
starting to uncover in this paper (e.g., how many extended states have inconsistent constraints
in step 4 of Algorithm~3).
Moreover, when accounting for algorithmic complexity reductions,
not only search space reductions need to be accounted for, but the effort to prune the search space. Otherwise, the search space could be reduced at the expense of a more complex pruning stage.

We conclude this Subsection with one more remark in relation to Algorithm~3.
Given the equivalence for a binary prefix code of it being a Huffman code and of 
that code having the sibling property~\cite{Gal78}, 
we could have considered the constraints imposed by the sibling property to obtain the
necessary conditions in $\mathcal{Z}(\dddot\Theta)$ for $\dddot\Theta$ to be a Huffman code, even if in a less straightforward manner.

\subsection{Consistency Verification}

The final consideration in this section is that of evaluating
whether a system of inequality constraints is consistent or not. We have intentionally overlooked the issue up to this point, as it is a self-contained problem. In this subsection, we formalize the problem, and then describe how to solve it efficiently.

Given an extended state $\dddot\Theta$ we want to know whether the following system of
inequalities is consistent:
\begin{multline}
\big\{ \p(x_i) = P_i \big\}_{\forall x_i \in X}
\cup\,\,
\mathcal{Z}(\dddot\Theta)
\,\,\cup
\Big\{ 
\sum_{i=1}^{|X|} \p(x_i) \\
+ \sum_{i=0}^{\mathcal{S}(\dddot\Theta)-1} \p(u_i) \le 1,\,\, 
 \p(u_{\mathcal{S}(\dddot\Theta)}) = 1 \Big\}
\label{eq:ineq}
\end{multline}
The system is composed by all known probabilities given by $X$, here denoted by $P_i$,  by all constraints due to merging operations in $\mathcal{Z}(\dddot\Theta)$,
and by two more additional constraints. The first additional constraint,  
$\sum \p(x_i) + \sum \p(u_i) \le 1$, ensures that probabilities for a prefix-free code deriving from $\dddot\Theta$ do not grow over $1$. It is not a strict equality, as additional symbols from $U$ may be added to $\dddot\Theta$ by $\text{h}_b$. The second additional constraint, $\p(u_{\mathcal{S}(\dddot\Theta)}) = 1$, together with $0 \le \p(u_0)$ and 
$\p(u_i) \le \p(u_{i+1})$ from $\mathcal{Z}(\dddot\Theta)$, ensure that $0 \le \p(u_0) \le \p(u_1) \le \ldots \le \p(u_{\mathcal{S}(\dddot\Theta)-1}) \le 1$.

It is well known that the problem of determining the feasibility of the previous inequality system can be posed as a linear programming problem as follows.
\begin{equation}
\begin{aligned}
\text{minimize\hspace{2em}} & \text{f}(\mathbf{x}) \\
\text{subject to\hspace{2em}} & 
\mathbf{A} \mathbf{x} - \mathbf{b} \preceq \mathbf{0} \\
\end{aligned}
\label{eq:linear_program}
\end{equation}
where $\mathbf{x} = \big(\p(u_0), \ldots, \p(u_{\mathcal{S}(\dddot\Theta)-1})\big)$, $\text{f}(\cdot)$ is an arbitrary linear function of $\mathbf{x}$, and $\mathbf{A} \mathbf{x} - \mathbf{b} \preceq \mathbf{0}$ are inequalities equivalent to those in~\eqref{eq:ineq}, with $\preceq$ denoting element-wise comparison.
We employ bold symbols to denote vector and matrices, and to distinguish them from those of previous sections.

Finding any solution, $\mathbf{x}^*$, for~\eqref{eq:linear_program} yields a point satisfying all inequalities. Conversely, if there is no solution to~\eqref{eq:linear_program}, then the system in~\eqref{eq:ineq} is inconsistent.
In our case, solving~\eqref{eq:linear_program} is not straightforward.
Up to this point, a CAS
can perform all operations described in this manuscript,
thus ensuring that solutions can be found 
for all redundancy bounds.
Unfortunately, efficient linear program solvers are of a numerical nature,
and as such the optimization can incorrectly fail to reach a solution due to numerical problems,
which could incorrectly discard valid extended states in Algorithm~\ref{a3},
thus resulting in an incorrect calculation of $\mathcal{R}^*_\text{min}$.
Note that the opposite could also occur, but does not represent a serious problem. 
That is, an invalid solution could be reached, which could incorrectly preserve an inconsistent extended state, but this would only increase the size of $\Psi$ and not alter 
the validity of its use to find $\mathcal{R}^*_\text{min}$.

\newcommand{\blambda}{\text{\boldmath$\lambda$}}

To address the issue of numerical solvers, we can reformulate the linear program, through duality~\cite{BV04}, into
a problem in which finding any solution (regardless of optimality) proves 
that no solution is possible for~\eqref{eq:linear_program}.
First, we reformulate~\eqref{eq:linear_program} into the following so-called Phase I problem:
\newcommand{\s}{\gamma}
\begin{equation}
\begin{aligned}
\text{minimize\hspace{2em}} & \s \\
\text{subject to\hspace{2em}} & 
\mathbf{A} \mathbf{x} - \mathbf{b} \preceq \s \cdot \mathbf{1} \\
\\
\end{aligned}
\label{eq:linear_program_p1}
\end{equation}
If there exists a solution $(\s^*, \mathbf{x}^*)$ with $\s^* > 0$, then there is no $\s$ less than $\s^*$ that satisfies the constraints and thus $\mathbf{A} \mathbf{x} - \mathbf{b} \preceq \mathbf{0}$ cannot be valid.

For~\eqref{eq:linear_program_p1}, a numerical solver would yield an approximate solution $(\widetilde{\s}^*, \widetilde{\mathbf{x}}^*)$
from which it would not be possible to infer whether $\s^* > 0$ is true or not.
To address this, we employ the dual problem of~\eqref{eq:linear_program_p1}:
\begin{equation}
\begin{aligned}
\text{maximize\hspace{2em}} & \text{g}(\blambda) \\
\text{subject to\hspace{2em}} & \mathbf{A}^{T} \blambda = 0 \\
& \mathbf{1}^{T} \blambda = 1 \\
& \blambda \succeq \mathbf{0}, \\
\end{aligned}
\label{eq:linear_program_pd}
\end{equation}
with $\text{g}(\blambda) = - \mathbf{b}^{T} \blambda$ (see~\cite[p. 225]{BV04}).

From strong duality and Slater's constraint qualification, 
we can conclude that $\text{g}(\blambda) \le \s$
and that $\text{g}(\blambda^*) = \s^*$ for solution $\blambda^*$.
If solving~\eqref{eq:linear_program_pd} yields $\text{g}(\blambda^*) > 0$, 
then $\s^* > 0$ and we have proved that~\eqref{eq:ineq} is inconsistent. Or for that matter, 
\eqref{eq:ineq} can be declared inconsistent as soon as any $\blambda$ is found that verifies $\text{g}(\blambda) > 0$ during the optimization process, since
we have found proof that $\s \ge \text{g}(\blambda) > 0 $.

Again, a numerical solver yields only an approximate solution $\widetilde{\blambda}^*$ for~\eqref{eq:linear_program_pd}. However, this is sufficient for our purposes,
as the consequences of finding an incorrect solution and of failing to reach a solution are
interchanged from primal to dual problems.
The solution given by the numerical solver is a vector of rational numbers
which can be safely verified with a CAS. Hence, 
if a solution yields $\text{g}(\widetilde{\blambda}^*) > 0$ 
we can conclude that~\eqref{eq:ineq} is inconsistent and terminate
the corresponding state trajectory.
Should a solution fail to satisfy $\text{g}(\widetilde{\blambda}^*) > 0$ 
or the numerical method fail to 
reach a solution, we cannot make claims regarding the consistency of~\eqref{eq:ineq}.
In this case we cannot terminate the state trajectory.

\section{Examples} \label{sect5} %

Examples and applications of the results obtained by the general bound $\mathcal{R}_\text{min}^*$ are presented in this section.

\subsection{General Examples}

\begin{figure}[h]
\begin{center}
\includegraphics[width=\linewidth]{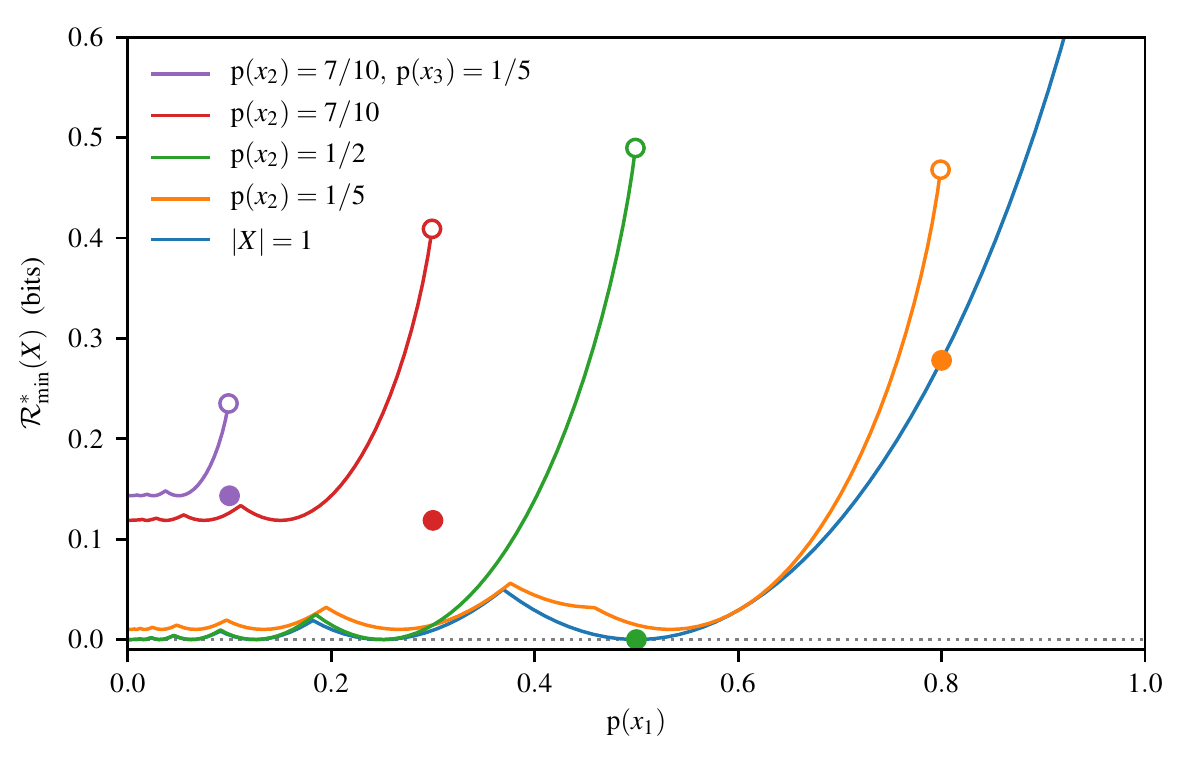}
\end{center}
\caption{Plot of $\mathcal{R}^*_{\mathrm{min}}(X)$, where some
probabilites are known, with $\p(x_1)$ varying along the horizontal axis.}
\label{f1}
\end{figure}

The bound $\mathcal{R}_\text{min}^*$ is plotted in Fig.~\ref{f1} for a few cases 
of one, two, or three known probabilities. 
For the case of one known probability, 
the best bounds available in the literature are given in 
\cite[Fig.~1]{MA87} and \cite[Fig.~4]{MPK06}.
As expected, the proposed method yields results that are consistent with the results reported in those
works. In particular, a computer algebra system was employed to verify that the bounds evaluate
identically on a fine grid of probability values.

It can be seen that as the sum of the known probabilities approaches~$1$,
redundancy quickly rises. Redundancy falls again, once the sum reaches~$1$.
The rise is caused by the prefix-free code having to necessarily account for one additional
symbol with very low associated probability.

For the cases of $\p(x_2)=7/10$ and $\p(x_2)=1/2$,
the respective curves are identical
to that of $|X|=1$, except for constant shift along the y-axis and a difference in scale.
It can easily be seen that this is the
result of $x_2$ being merged by the last step of the Huffman Algorithm in both cases.
In fact, this can be seen to be true whenever the known 
probabilities that can be arranged in a sequence 
where each successive probability 
is larger than the sum of the remaining ones (see \cite[Lemma~1]{MPK06}).
I.e., $\p(x_i) > 1 - \sum_{j \le i} \p(x_j)$.
For example, this holds for the case shown when $\p(x_2)=7/10$ and $\p(x_3)=1/5$, where $\frac{7}{10} > 1 - \frac{7}{10} = \frac{3}{10}$ and $\frac{1}{5} > 1 - \frac{7}{10} - \frac{1}{5} = \frac{1}{10}$.
However, this does not hold for the case when $\p(x_2)=1/5$, as 
evidenced by lack of smoothness at $\p(x_1)\simeq 0.4581$ caused by
the transition between optimal prefix-free codes.

We can use the examples in Fig.~\ref{f1} to disprove that 
\begin{equation}
\mathcal{R}_\text{min}^*(X) \simeq
\max_{x_i \in X}\Big\{\mathcal{R}_\text{min}^*(\{x_i\})\Big\}.
\end{equation}
This implies that considering known probabilities independently, one at a time, does not yield 
a good estimator of minimum redundancy.
For example, 
given $X=\{x_1,x_2\}$ with $\p(x_1)=0.49$ and $\p(x_1)=0.5$,
we have
\begin{equation}
\mathcal{R}_\text{min}^*\big(X\big) = \frac{49}{50}+\frac{49}{50}\log \frac{7}{10} - \frac{1}{50}\log 5 \simeq 0.4293 \text{ bits},
\end{equation}
while 
\begin{equation}
\mathcal{R}_\text{min}^*\big(\{x_1\}\big) = \frac{49}{100} +
\frac{49}{50}\log\frac{7}{10} + \frac{51}{100}\log \frac{51}{50}
\simeq 0.0003 \text{ bits}
\end{equation}
and
\begin{equation}
\mathcal{R}_\text{min}^*\big(\{x_2\}\big) = 0 \text{ bits}.
\end{equation}

In Fig.~\ref{f2}, four two-dimensional contour plots are presented, in which two known probabilities vary along the axes of the plot.
These plots are colored to denote regions sharing the same optimal prefix-free code.
As more than one prefix-free code may be optimal at some given coordinates,
codes covering larger regions take precedence over smaller ones in the figure.
The plot follows a ``map coloring'' scheme, where 
colors may repeat but adjacent codes are guaranteed to be of different color.
Consistent with the idea of Shannon coding, it can be seen that in Fig.~\ref{fig:fig1} local minimums are arranged at coordinates where $\p(x_1)$ and $\p(x_2)$ are negative powers of two (i.e.,
$\p(x_1)=2^{-i}$, $\p(x_2)=2^{-j}$ with $i,j\in \mathbb{N}$).
In this case, a single prefix-free code covers the local neighborhood
around a local minimum.
Contours for Fig.~\ref{fig:fig2} and~\subref{fig:fig4} are similar in character to those in~\subref{fig:fig1}, except for scale.
Fig.~\ref{fig:fig3} exhibits more substantial differences
in the regions covered by each prefix code,
with changes both in shape and quantity.
In Fig.~\ref{fig:fig1}, 
there is an inappreciable diagonal line
where~$\p(x_1) + \p(x_2) = 1$ and~$|Y| = 0$,
for which the optimal code is different from 
those in the interior of the plot where $|Y| > 0$.
This discontinuity is akin to those in Fig.~\ref{f1}. Similar 
discontinuities appear in parts~\subref{fig:fig2},~\subref{fig:fig3} and~\subref{fig:fig4}
of Fig.~\ref{f2}.

One additional remark regarding Fig.~\ref{f2} is that 
the optimal regions for prefix-free codes seem to be of polygonal shape,
and possibly convex.
The shape of these regions is an interesting topic for future research.

\begin{figure*}[h]
\begin{center}
\begin{subfigure}[b]{0.5\linewidth}%
\centering\includegraphics[width=.9\linewidth]{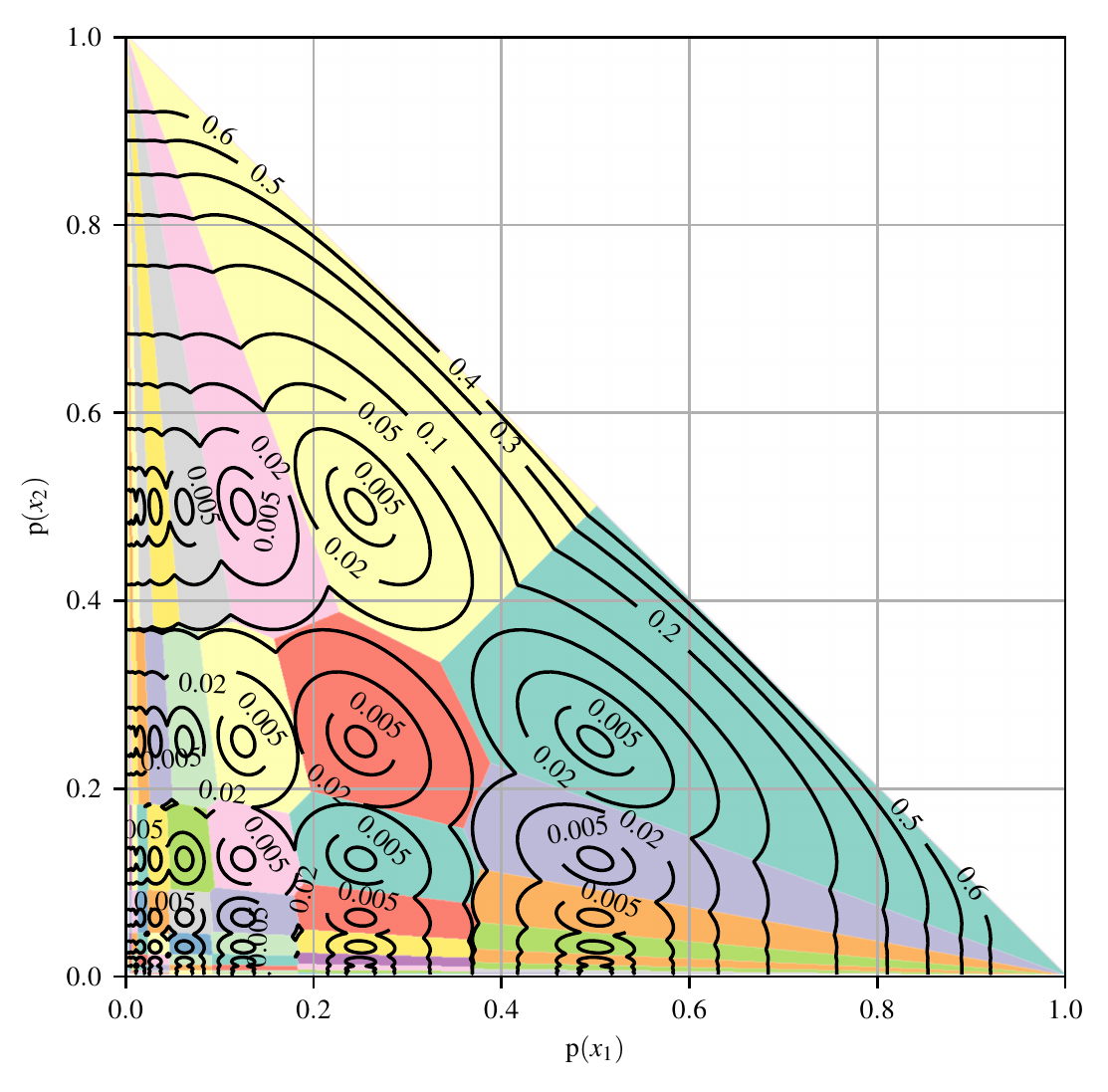}
\caption{$|X|= 2$ \label{fig:fig1}}%
\end{subfigure}%
\begin{subfigure}[b]{0.5\linewidth}%
\centering\includegraphics[width=.9\linewidth]{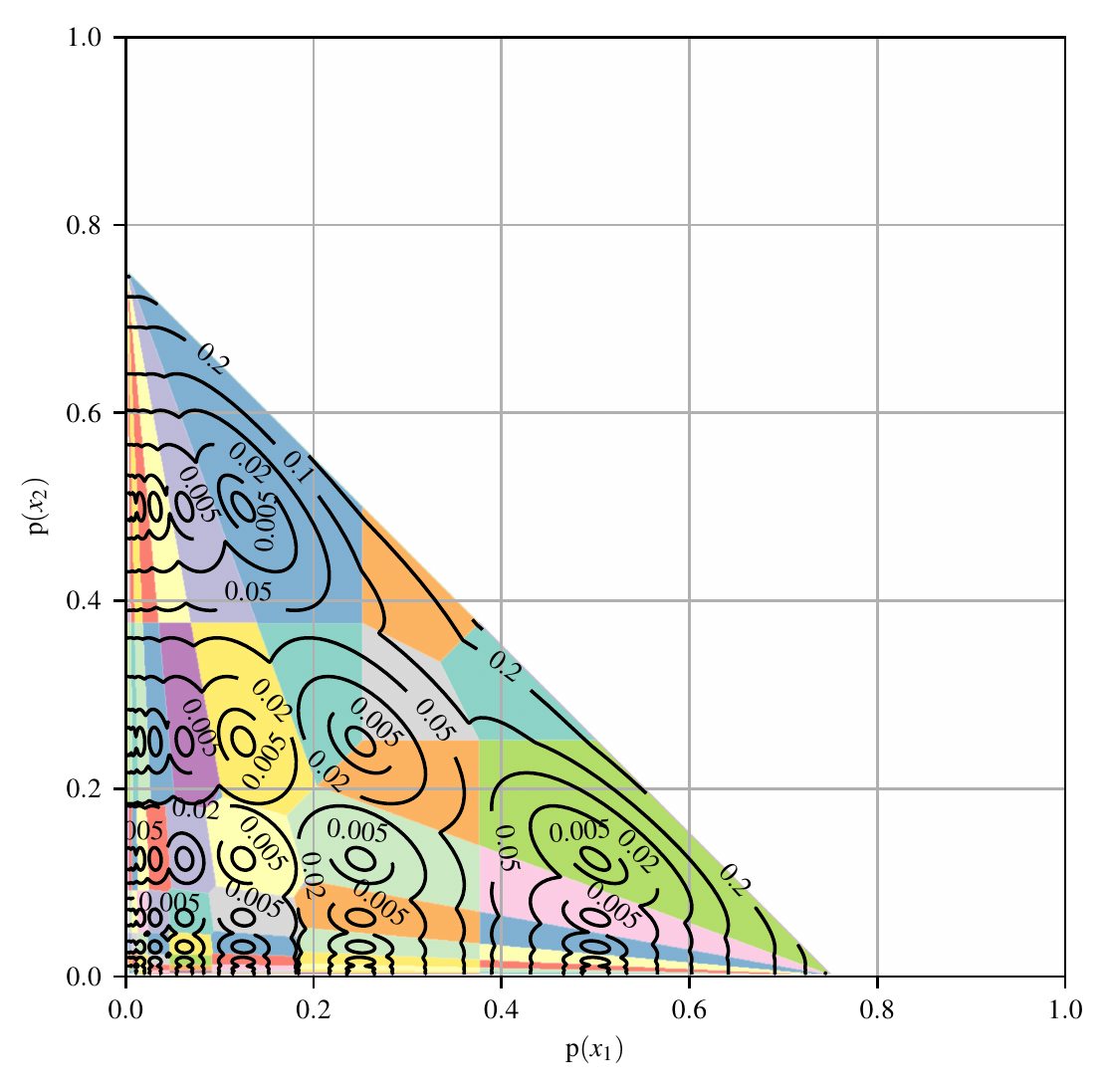}
\caption{$|X|= 3$, $\p(x_3) = 1/4$\label{fig:fig2}}%
\end{subfigure}\\[10pt]

\begin{subfigure}[b]{0.5\linewidth}%
\centering\includegraphics[width=.9\linewidth]{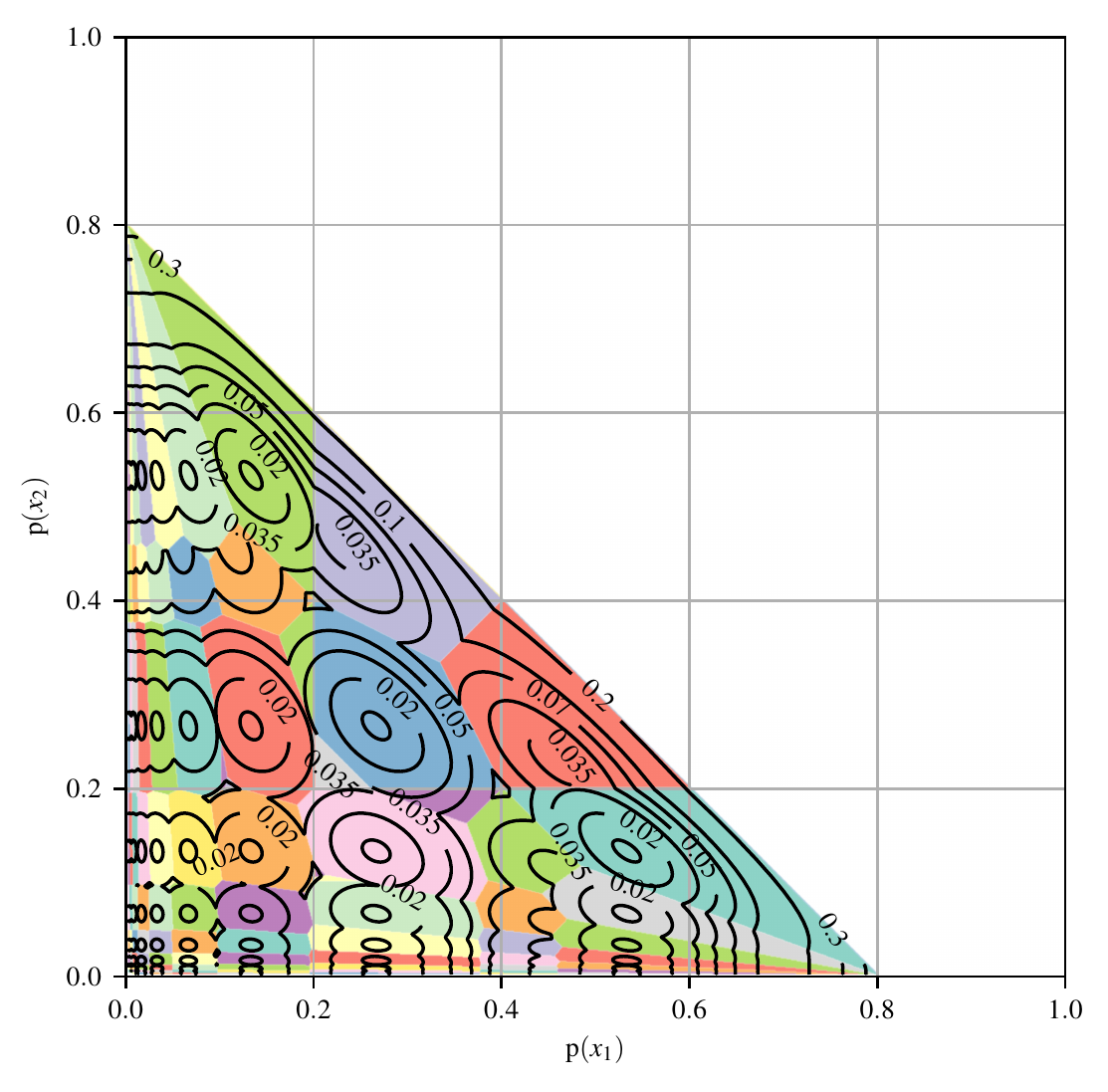}
\caption{$|X|= 3$, $\p(x_3) = 1/5$\label{fig:fig3}}%
\end{subfigure}%
\begin{subfigure}[b]{0.5\linewidth}%
\centering\includegraphics[width=.9\linewidth]{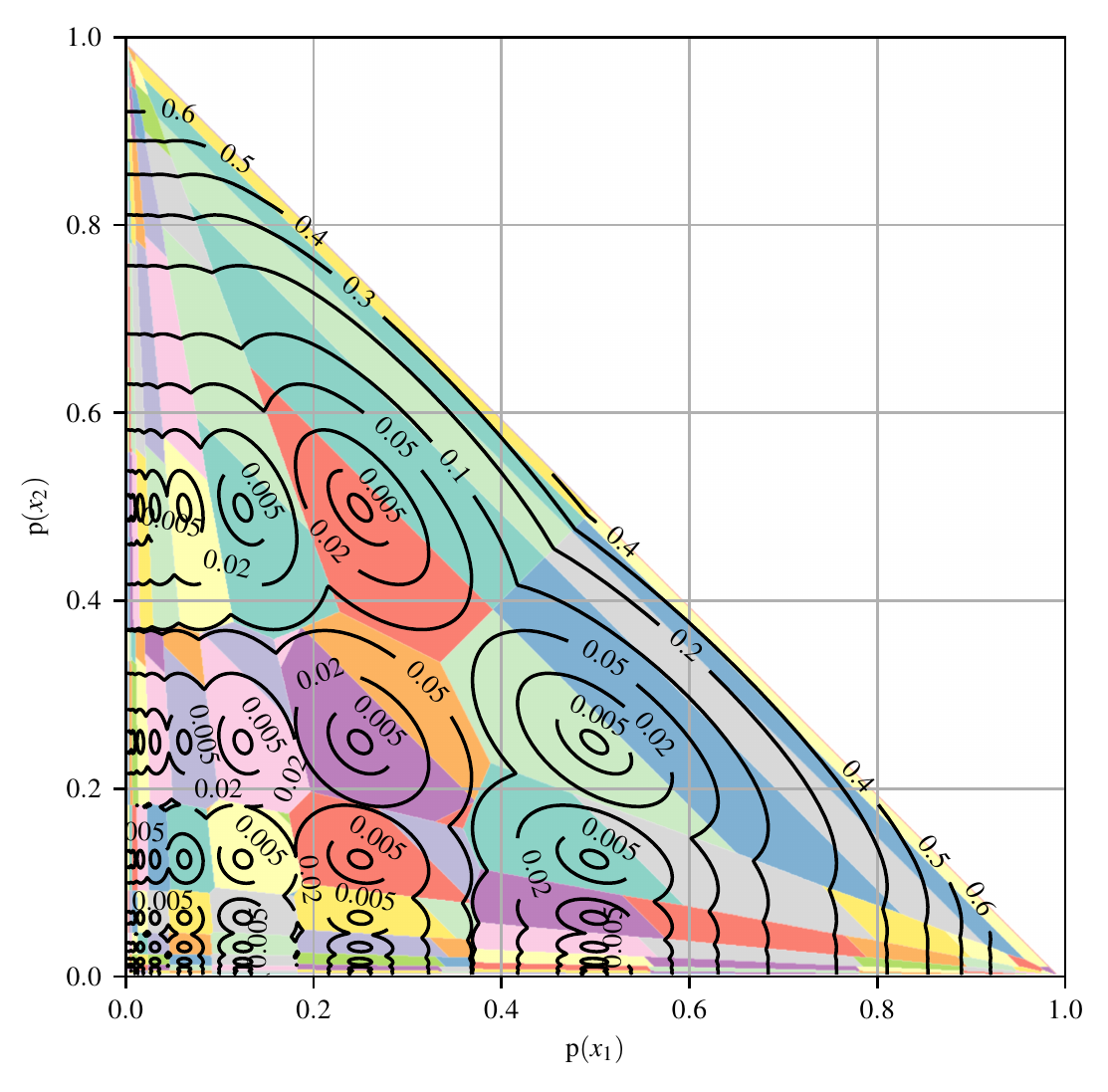}
\caption{$|X|= 3$, $\p(x_3) = 1/100$\label{fig:fig4}}%
\end{subfigure}%
\end{center}
\caption{Contour plots of $\mathcal{R}^{*}_{\mathrm{min}}(X)$, where $\p(x_1)$ varies along the horizontal axis and $\p(x_2)$ varies along the vertical axis.
Background colors denote different Huffman codes.
Adjacent codes have different colors, but colors may repeat for
non-adjacent codes. Labels on contour lines are in bits.}
\label{f2}
\end{figure*}

\subsection{Closed-form Expression for Two Known Probabilities}

By examining codes employed to produce Fig.~\ref{fig:fig1},
a common pattern can be found for the case of $|X|=2$.
For each local minimum at $\p(x_1)=2^{-a}$, $\p(x_2)=2^{-b}$,
at least one of the optimal codes 
associated with that minimum has the structure reported in Fig.~\ref{fig:drawing}.
Code optimality at the local minimum can easily be proven by setting $\p(y_i)$ to the probability of the sibling of $y_i$.
For $\p(x_1) < 0.5$ and $\p(x_2) < 0.5$, 
the aforementioned codes at the four (or three) local minimums surrounding a given pair of $\p(x_1)$ and $\p(x_2)$
seem sufficient to obtain the lower redundancy bound
for that point.

\begin{conjecture}\label{t3}
For $|X|=2$, $\mathcal{R}_\text{min}^*\big(\{x_1,x_2\}\big)$
is given by one of the three following expressions.
\begin{enumerate}
\item[(a)] When $\p(x_1) + \p(x_2) = 1$, Eq.~\eqref{a1} can be directly applied:
\begin{equation}
\mathcal{R}_\text{min}^*\big(\{x_1,x_2\}\big) = 
\mathcal{R}_{[x_1,x_2]}\big(\{x_1,x_2\}\big).
\end{equation}

\item[(b)] When $\p(x_1) + \p(x_2) < 1$ and $\p(x_1) \ge 0.5$ (and similarly when $\p(x_2) \ge 0.5$),
from Lemma~1 in~\cite{MPK06} we have that
\begin{equation}
\mathcal{R}_\text{min}^*\big(\{x_1,x_2\}\big) = 
\mathcal{R}_\text{min}^*\big(\{x_1\}\big) +
(1 - \p(x_1)) \cdot \mathcal{R}_\text{min}^*\big(\{x'\}\big)
\end{equation}
where $\p(x') = \frac{\p(x_2)}{1-\p(x_1)}$.
Note that the expression for $\mathcal{R}_\text{min}^*\big(\{x\}\big)$ is given in~\cite{MPK06}.

\item[(c)] Otherwise,
\begin{equation}
\mathcal{R}_\text{min}^*\big(\{x_1,x_2\}\big) = 
\min_{C \in \Delta} \mathcal{F}\big(\{x_1,x_2\}, C\big).
\end{equation}
where $\Delta$ contains the four possible codes
given by Fig.~\ref{fig:drawing} when $a \in \{\lfloor - \log \p(x_1) \rfloor, \lceil - \log \p(x_1) \rceil\}$ and $b \in \{ \lfloor - \log \p(x_2) \rfloor, \lceil - \log \p(x_2) \rceil \}$, with the exception of code $[x_1,x_2]$, which is never in $\Delta$.
That is
\begin{multline}
\mathcal{R}_\text{min}^*\big(\{x_1,x_2\}\big) = \beta_T \cdot \log \beta_T \\
+ \min_{C_{(a,b)} \in \Delta} 
\Big\{
\beta_0 - \beta_T \cdot \log \left({1 - 2^{-a} - 2^{-b}}\right) \Big\}.
\end{multline}
\end{enumerate}
\end{conjecture}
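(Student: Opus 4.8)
The plan is to treat the three cases separately; parts (a) and (b) are essentially bookkeeping, and part (c) carries all the difficulty. For part (a), $\p(x_1)+\p(x_2)=1$ leaves no probability mass for any further symbol, so the only admissible source is $B=\{x_1,x_2\}$ itself (indeed \eqref{a27} gives $\mathcal{T}(X)=2$ here), whose unique Huffman code is $[x_1,x_2]$; the statement is then just the definition of redundancy applied to this $B$. For part (b), $\p(x_1)\ge 1/2$ forces $\p(x_1)\ge 1-\p(x_1)$, i.e. $\p(x_1)$ is at least the combined probability of all other symbols of any $B$ with $X\sqsubseteq B$; hence in Algorithm~\ref{a1} applied to $B$ the symbol $x_1$ survives until the final merge, and the standard last-merge decomposition of redundancy gives $\mathcal{R}_H(B)=\mathcal{R}_H(B')+(1-\p(x_1))\,\mathcal{R}_H(B'')$, with $B''$ holding the probabilities of the remaining symbols renormalized by $1-\p(x_1)$. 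Taking the minimum over $B$ and substituting the known single-symbol formula of~\cite{MPK06} yields the stated expression; the boundary value $\p(x_1)=1/2$ is covered by continuity (or by fixing the tie-break in step~2 of Algorithm~\ref{a1}).

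For part (c), by Theorem~\ref{t:rmin_star} it suffices to consider sources of size at most $\mathcal{T}(X)$, and by Theorem~\ref{t:separation} together with the fact that Algorithm~\ref{a2} enumerates \emph{all} prefix-free codes we may write $\mathcal{R}^{*}_\text{min}(X)=\min_{C}\mathcal{F}(X,C)$, the minimum running over all prefix codes $C$ with two distinguished leaves carrying $x_1,x_2$ at depths $a:=\len_C(x_1)$ and $b:=\len_C(x_2)$. The key reduction is that in~\eqref{eq:closed-form} the quantities $\beta_0$ and $\beta_T$ depend only on $(a,b)$ and on the known probabilities, while $\beta_T>0$ in case (c); since $\beta_T\log\frac{\beta_T}{\sum_i 2^{-\beta_i}}$ is decreasing in $\sum_i 2^{-\beta_i}$ and Kraft's inequality gives $\sum_i 2^{-\beta_i}\le 1-2^{-a}-2^{-b}$ with equality exactly for complete trees, we obtain $\mathcal{F}(X,C)\ge g(a,b)$ where
\[
g(a,b)=\p(x_1)\,a+\p(x_2)\,b+\p(x_1)\log\p(x_1)+\p(x_2)\log\p(x_2)+\beta_T\log\beta_T-\beta_T\log\big(1-2^{-a}-2^{-b}\big),
\]
with equality attained by the complete-tree code that puts $x_1,x_2$ at depths $a,b$ and fills the other leaves with unknown symbols --- precisely the family of Fig.~\ref{fig:drawing}. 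Since in addition $\mathcal{R}^{*}_\text{min}(X)\le \mathcal{F}(X,C)$ for every prefix code $C$ (apply Huffman optimality to the source $B$ attained by the convex optimization of Subsection~\ref{sec:minimization}, which has $X\sqsubseteq B$), we conclude $\mathcal{R}^{*}_\text{min}(X)=\min_{(a,b)}g(a,b)$ over integers $a,b\ge1$ with $2^{-a}+2^{-b}<1$ (the pair $a=b=1$, i.e. code $[x_1,x_2]$, would force $\beta_T=0$, impossible here, so it drops out as stated). It now remains only to show this integer minimum is attained at one of the four pairs in the statement, since their complete-tree codes are exactly the members of $\Delta$.

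To that end, extend $g$ to real $a,b$ on $\{a,b\ge1,\ 2^{-a}+2^{-b}<1\}$. As $(a,b)\mapsto 2^{-a}+2^{-b}$ is convex and $t\mapsto-\log(1-t)$ is convex and increasing, $g$ is convex; setting $\nabla g=0$ gives the unique stationary point $(a^\star,b^\star)=(-\log\p(x_1),-\log\p(x_2))$, which is therefore the global minimizer, and $a^\star,b^\star>1$ because $\p(x_1),\p(x_2)<1/2$ in case (c). One also checks $\partial^2 g/\partial a\,\partial b>0$, so $g$ is supermodular, and that for fixed $b\ge b^\star$ the conditional real minimizer $\tilde a(b)$ satisfies $2^{-\tilde a(b)}=\p(x_1)\frac{1-2^{-b}}{1-\p(x_2)}$, whence $a^\star-1<\tilde a(b)\le a^\star$ (and symmetrically for $\tilde b(a)$): the coordinate-wise minimizers never stray more than one unit from $(a^\star,b^\star)$.

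The remaining step --- passing from the continuous minimizer $(a^\star,b^\star)$ to the integer minimizer and showing it lands on a corner of $\{\lfloor a^\star\rfloor,\lceil a^\star\rceil\}\times\{\lfloor b^\star\rfloor,\lceil b^\star\rceil\}$ --- is the hard part, and is precisely why this is only a conjecture: a convex function of two \emph{integer} variables need not attain its minimum at a corner of its rounding box, so convexity alone is insufficient. One-dimensional convexity in each coordinate does confine an integer minimizer $(a_0,b_0)$ to $a_0\in\{\lfloor\tilde a(b_0)\rfloor,\lceil\tilde a(b_0)\rceil\}$, $b_0\in\{\lfloor\tilde b(a_0)\rfloor,\lceil\tilde b(a_0)\rceil\}$, which with the bounds above pins $(a_0,b_0)$ to a $3\times3$ block around $(a^\star,b^\star)$; the task is then to rule out the eight non-corner cells. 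The approach I would try is to change variables to $(u,v)=(2^{-a},2^{-b})$, where the objective becomes $-\p(x_1)\log u-\p(x_2)\log v-\beta_T\log(1-u-v)$ up to an additive constant --- a sum of three logarithmic barriers whose gradient can be played against the geometric spacing of the admissible values $u,v\in\{2^{-k}:k\ge1\}$ --- or, failing a uniform argument, a finite case analysis on the fractional parts $\{a^\star\}$ and $\{b^\star\}$ that bounds $g$ at each offending cell against $g$ at the nearest corner. I expect essentially all of the remaining work, and possibly the true role of the hypothesis $\p(x_1),\p(x_2)<1/2$, to lie in closing this last estimate.
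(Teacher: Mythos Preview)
The statement you are addressing is labelled a \emph{conjecture} in the paper, and the paper does not attempt a proof: its entire supporting argument is the sentence ``We have seen the conjecture hold for $0.001 \le \p(x_i) \le 0.999$ at $0.001$ intervals, and for $0.05 \le \p(x_1) \le 0.9999$ and $0.05 \le \p(x_2) \le 0.1$ at $0.0001$ intervals.'' There is therefore nothing to compare your proposal against; what you have written already goes substantially further than the paper.

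On the content of your proposal: parts (a) and (b) are correct and complete. For (c), your reduction is sound --- the identity $\mathcal{R}^{*}_\text{min}(X)=\min_C\mathcal{F}(X,C)$ over all prefix codes, the Kraft bound $\sum_i 2^{-\beta_i}\le 1-2^{-a}-2^{-b}$ with equality for complete trees, and the resulting identification $\mathcal{R}^{*}_\text{min}(X)=\min_{(a,b)}g(a,b)$ over admissible integer pairs are all valid, as are your convexity and supermodularity computations and the location of the continuous minimizer at $(-\log\p(x_1),-\log\p(x_2))$. You have also correctly isolated the genuine obstruction: passing from the real minimizer to the integer one and pinning it to the $2\times 2$ rounding box. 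That step is exactly what the paper leaves open, and your honest assessment that convexity alone does not force it is right. In short, your proposal is not a complete proof, but it is a correct partial proof that strictly contains the paper's treatment; the paper itself offers only numerical evidence.
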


We have seen the conjecture hold for $0.001 \le \p(x_i) \le 0.999$ at $0.001$ intervals,
and for $0.05 \le \p(x_1) \le 0.9999$ and $0.05 \le \p(x_2) \le 0.1$ at $0.0001$ intervals.

Interestingly, the conjecture draws parallels with the expression for $m=1$ in~\cite{MPK06}, in that the minimal redundancy for $m=1$ is as if chosen from two neighboring codes with well-known structures and with minimum redundancies at $2^{\lfloor -\mathrm{log}\, \mathrm{p}(x_1) \rfloor}$ and $2^{\lceil -\mathrm{log}\, \mathrm{p}(x_1) \rceil}$.

\begin{figure}
\begin{center}
\includegraphics[width=.8\linewidth]{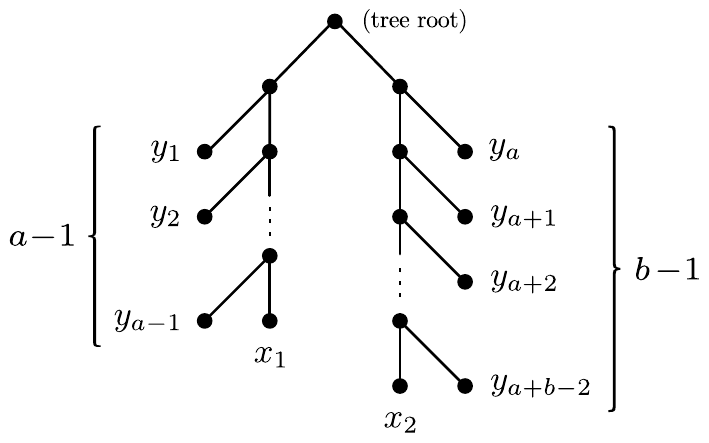}
\end{center}
\caption{Optimal code for $|X|=2$ associated with a local minimum at $\p(x_1)=2^{-a}$, $\p(x_2)=2^{-b}$.}\label{fig:drawing}
\end{figure}

\subsection{Run-time Analysis}

In this subsection we provide a brief experimental  run-time analysis of the proposed
method.
To this purpose we employ a software implementation in Python~\cite{lutz2013learning},
with SymPy~\cite{meurer2017sympy} as CAS and the linear programming solver from SciPy~\cite{scipy}.
Experiments have been performed
on an Intel Core I3-4340 workstation with dual-channel RAM clocked at 1600 MHz.

Experiments for various examples are reported in Figs.~\ref{fra} and~\ref{frb}.
First, Fig.~\ref{fra} compares the number of codes to explore
after execution of Algorithm~3 (i.e., $|\Psi|$) to the number of
codes to explore in an exhaustive search.
As expected, the number of codes grows as
the minimum probability becomes smaller. However,
it stays significantly below the number of codes necessary
for an exhaustive search, which rapidly reaches values over $10^{80}$.

Execution times 
for the proposed method
are provided in Fig.~\ref{frb}.
Execution times account for both execution of Algorithm~3
and the application of \eqref{eq:closed-form}.
As a comparative reference, Fig.~\ref{frb} also includes 
results for exhaustive search, which are extrapolated
from the proposed method
by 
employing the ratios from Fig.~\ref{fra}
and only considering the time required to calculate \eqref{eq:closed-form}.
From these experimental results,
it seems that the efficiency of the proposed method
comes directly from having less codes to explore.
While pruning costs (i.e., Algorithm~3) may not
be small, they are negligible in relation 
to the large reduction in overall execution time.

\begin{figure}
\begin{center}
\includegraphics[width=\linewidth]{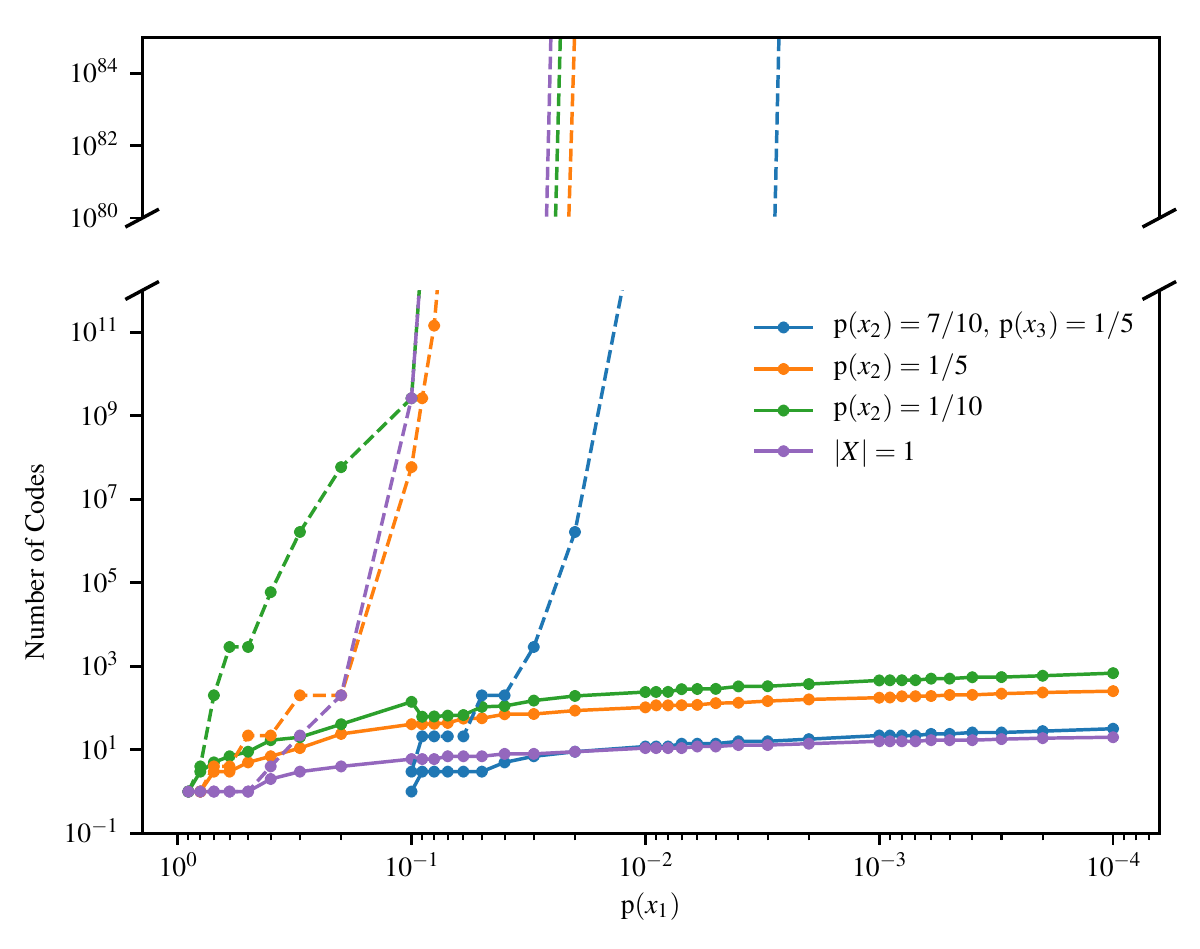}
\end{center}
\caption{Number of codes to be examined 
employing Algorithm~3 (solid curves) and exhaustive search (dashed curves)
for various known probabilities.}
\label{fra}
\end{figure}

\begin{figure}
\begin{center}
\includegraphics[width=\linewidth]{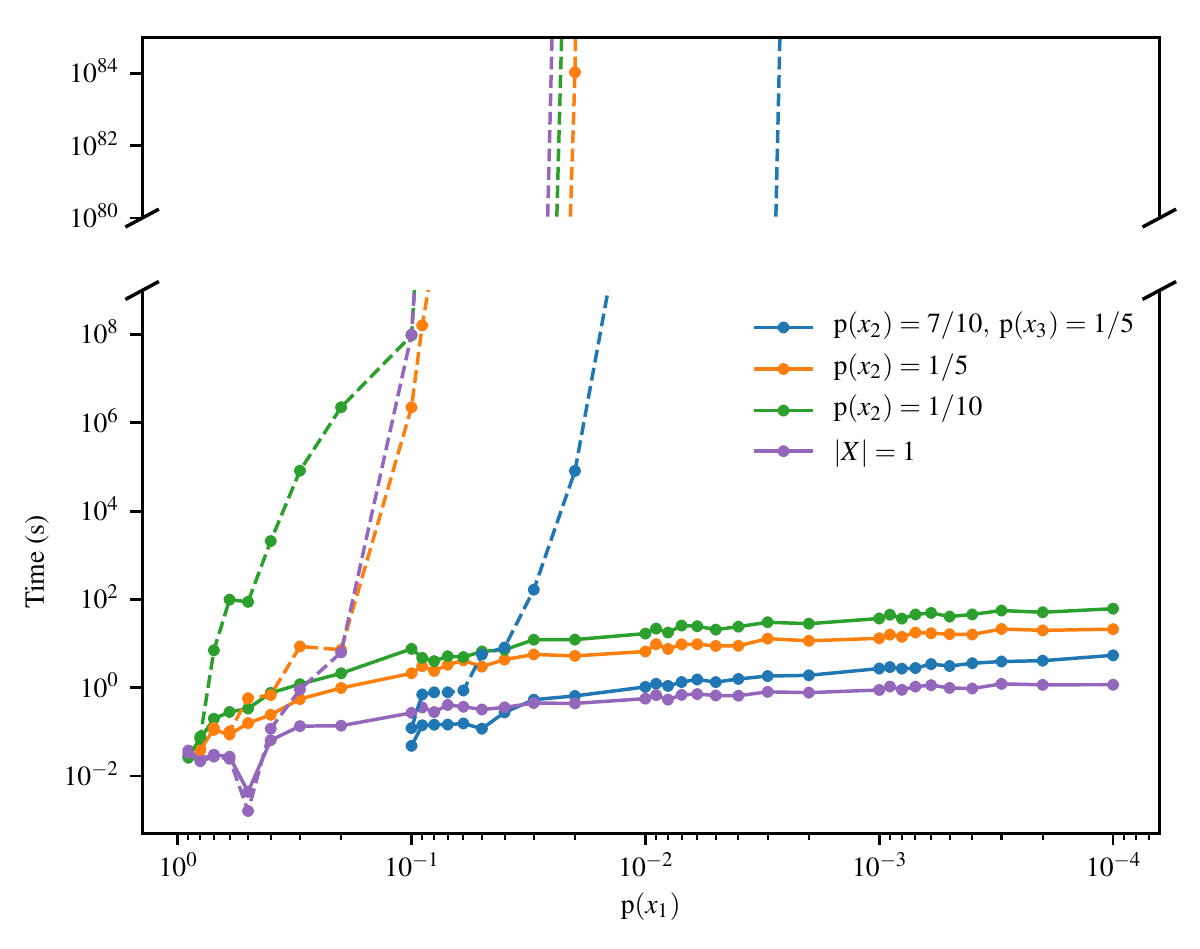}
\end{center}
\caption{Execution time of Algorithm~3 (solid curves) 
and extrapolated time of exhaustive search (dashed curves)
for various known probabilities.}
\label{frb}
\end{figure}

%%%%%%%

\subsection{Application to V2V Codes}
A few examples on how this bound can be employed
to prune the search space of V2V codes are now presented.
For simplicity, we assume that a V2V code translates variable-length
sequences of input symbols of a memoryless source
into the words of a prefix-free dictionary $W$,
and that those words are then encoded via a Huffman code.
These simplified V2V codes are defined by their dictionary.
For example, for a memoryless source with $\p(a_1) = 0.9$, $\p(a_2) = 0.1$,
one dictionary choice is $W=\{ a_1a_1,a_1a_2,a_2 \}$, which 
can in turn be treated as a three-symbol source 
with probabilities $\p(a_1a_1) = 0.81$, $\p(a_1a_2) = 0.09$, $\p(a_2)=0.1$,
and associated Huffman codewords $1$, $00$, and $01$ (or equivalent).
This defines a V2V code
where $a_1a_1 \mapsto \text{`1'}$, $a_1a_2 \mapsto \text{`00'}$, and 
$a_2 \mapsto \text{`01'}$. 
A search for good codes then becomes a search for good dictionaries.

Given a dictionary $W$, the redundancy for the V2V code that it defines (see~\cite{BDS08}) is 
\begin{equation}
\mathcal{R}_W(A)=\frac{\mathcal{L}_H(W_A) - \mathcal{H}(W_A)}{\sum_{w \in W}{\p(w)\cdot |w|}} = 
\frac{\mathcal{R}_H(W_A)}{\sum_{w \in W}{\p(w)\cdot |w|}} 
\end{equation}
where $|w|$ denotes the length of word $w$ in $W$,
and 
where $W_A$ denotes the source whose symbols are the words in $W$.
The probability of each such symbol is the product of the probabilities of that word.

Given $X \sqsubseteq W_A$ and a maximum word length for $W$,
the minimum redundancy for a V2V code is
\begin{equation}
\mathcal{R}_{\text{min}}(X, L)= 
\frac{\mathcal{R}_{\text{min}}^*(X)}{\sum_{w \in X}{\p(w)\cdot|w|} + L\cdot \sum_{w \notin X}{\p(w)}},
\end{equation}
or
\begin{equation}
\mathcal{R}_{\text{min}}(X, L)= 
\frac{\mathcal{R}_{\text{min}}^*(X)}{L + \sum_{w \in X}{\p(w) (|w| - L)}},
\end{equation}
given that $\sum_{w \notin X}{\p(w)} = 1 - \sum_{w \in X}{\p(w)}$.

In our previous example, we had $\p(a_1a_1) = 0.81$, $\p(a_1a_2) = 0.09$, and $\p(a_2) = 0.1$.
Using the bounds developed here, we can compute 
that any dictionary of $10$ words or less that contains $X=\{a_1a_1, a_1a_2\}$ will yield a
redundancy of no less than 
$\mathcal{R}_{\text{min}}(\{a_1a_1, a_1a_2\}, 10)=\log(2^{-71} \cdot 3^{342}  \cdot 5^{-190}) / 280 \simeq 0.107$ bits per symbol.
This is a minimum encoding overhead of 22.7\% in relation to the source entropy.
If we are targeting a reasonable 1\% overhead, we can clearly discard all dictionaries containing $X$.

Consider another example where a ternary memoryless source
has symbol probabilities of $\p(a_1) = 0.7$, $\p(a_2) = 0.2$ and $\p(a_3) = 0.1$.
The (exact) redundancy for 
$W=\{ a_1 a_1, a_1 a_2, a_1 a_3, a_2a_1, a_2a_2, a_2a_3, a_3a_1, a_3a_2, a_3a_3 \}$
is $\log(2^{73}\cdot 5^{-200} \cdot  7^{140})/200 \simeq 0.008$ bits.
Given that $\mathcal{R}_{\text{min}}(\{a_1, a_3\}, 3)
= \log( 2^8 \cdot 3^{-2} \cdot 5^{-10} \cdot 7^7)/14 \simeq 0.09$ bits,
we know that any dictionary containing words $a_1$ and $a_3$ can not yield 
an optimal V2V code of length $3$ or less.

\section{Conclusions} \label{sect6}

In this paper, we present a tight bound to the minimum redundancy 
achievable by the Huffman code 
when source probabilities may be only partially known.

First, we show how to calculate a bound for alphabets of a given size,
by generating all prefix-free codes that could become a Huffman code, and then,
given the known probabilities, calculating their redundancy through convex optimization. 
This process yields a closed-form number of the minimum redundancy.
The previous bound is further extended to 
alphabets of up to a given size, and then 
generalized to alphabets of any size. This is accomplished by showing
that the last two cases are equivalent under certain conditions.
Moreover, all bounds are tight by construction, as examples lying on each bound are found for each case.

To enable the calculation of the general bound, which may otherwise be
infeasible to calculate, an efficient method is provided to
enumerate all prefix-free codes that could become a Huffman code,
while discarding cases not worth considering. This is achieved 
through early pruning of prefix-free codes which are either
proven more redundant than other codes considered, or 
which are proven by a linear program
to never yield a Huffman code for the given known probabilities.
All obtained results are closed-form numbers
obtainable by means of a CAS, thus preventing any numerical
issue related to floating-point operations.
A numerical solver for linear programs is employed using
a strategy 
that guarantees that numerical issue cannot alter the final result.

In addition, we present examples where we show
the potential of the general bound to aid in the visualization of 
the structure of the minimum redundancy 
achievable by the Huffman code. 
Employing this, we derive a conjecture for a closed-form formula
for $\mathcal{R}^{*}_\text{min}(X)$ when $|X|$ is $2$.

Finally, we would like to remark that the work described here lays the foundation 
to more complex bounds which could incorporate additional restrictions on the probability 
distributions, such as relations between probabilities of multiple symbols, or constraints on their magnitude.

%\renewcommand\refname{List of references}
%\addcontentsline{toc}{section}{List of references}
%
\bibliographystyle{IEEEtran}
\bibliography{IEEEabrv,biblio}

\end{document}